\def\>{\rangle}
\def\<{\langle}
\newcommand{\ket}[1]{| {#1} \rangle}
\newcommand{\abs}[1]{\left| {#1} \right|} 
\newcommand{\ketbra}[2]{\ensuremath{\left|#1\right\rangle\!\!\left\langle#2\right|}}
\newcommand{\braket}[2]{\ensuremath{\!\!\left\langle#1|#2\right\rangle}\!\!}
\newcommand{\matrixel}[3]{\ensuremath{\left\langle #1 \vphantom{#2#3} \right| #2 \left| #3 \vphantom{#1#2} \right\rangle}}
\renewcommand{\v}[1]{\ensuremath{\boldsymbol #1}}
\definecolor{ppblue}{RGB}{46,117,182}
\definecolor{ppred}{RGB}{197, 90, 17}
\newcommand{\bmtilde}[1]{\bm{\tilde{#1}}}
\DeclareMathOperator{\Var}{Var}
\newcommand{\ev}[1]{\stackrel{\text{ev.}}{#1}}
\theoremstyle{plain}
\newtheorem{thm}{Theorem}
\newtheorem{lem}[thm]{Lemma}
\newtheorem{prop}[thm]{Proposition}
\newcounter{def}
\theoremstyle{definition}
\newtheorem{defn}[def]{Definition}
\newtheorem{rmk}[thm]{Remark}
\begin{document}

\title{Moderate deviation analysis of majorisation-based resource interconversion}
\author{Christopher T.~Chubb}
\email{christopher.chubb@sydney.edu.au}
\affiliation{Centre for Engineered Quantum Systems, School of Physics, University of Sydney, Sydney NSW 2006, Australia.}
\author{Marco Tomamichel}
\affiliation{Centre for Quantum Software and Information, School of Software, University of Technology Sydney, Sydney NSW 2007, Australia.}
\author{Kamil Korzekwa}
\affiliation{Centre for Engineered Quantum Systems, School of Physics, University of Sydney, Sydney NSW 2006, Australia.}

\begin{abstract}
We consider the problem of interconverting a finite amount of resources within all theories whose single-shot transformation rules are based on a majorisation relation, e.g.\ the resource theories of entanglement and coherence (for pure state transformations), as well as thermodynamics (for energy-incoherent transformations). When only finite resources are available we expect to see a non-trivial trade-off between the rate $r_n$ at which $n$ copies of a resource state $\rho$ can be transformed into $nr_n$ copies of another resource state $\sigma$, and the error level $\epsilon_n$ of the interconversion process, as a function of $n$.
In this work we derive the optimal trade-off in the so-called moderate deviation regime, where the rate of interconversion $r_n$ approaches its optimum in the asymptotic limit of unbounded resources ($n\to\infty$), while the error $\epsilon_n$ vanishes in the same limit. We find that the moderate deviation analysis exhibits a resonance behaviour which implies that certain pairs of resource states can be interconverted at the asymptotically optimal rate with negligible error, even in the finite $n$ regime.
\end{abstract}

\maketitle

\section{Introduction}
\label{sec:intro}

In principle, while processing quantum information, any initial state can be transformed into any final state. One could thus conclude that all quantum states are equally valuable or resourceful. In reality, however, some transformations are harder to implement than others, which results in a partial ordering of the set of quantum states, with the hardest to prepare at the top, and easiest at the bottom. Such a resource hierarchy arises naturally when we face any kind of restrictions: from the locality constraint, through experimental difficulties in preparing particular superpositions, to fundamental constraints induced by physical laws like energy conservation. The mathematical framework developed to study possible state transformations under such restrictions is known under the collective name of resource theories~\cite{coecke2016mathematical,chitambar2018quantum}.

Inspired by classical information theory, the early resource-theoretic works considered optimal conversion rates between different resource states in the asymptotic regime, i.e., the limit of processing infinitely many copies of a given state. This led to the discovery of asymptotic resource measures, which provided operational meaning to quantities such as entropy of entanglement~\cite{bennett1996concentrating} and non-equilibrium free energy~\cite{brandao2013resource}. Namely, a given transformation becomes asymptotically possible if and only if the corresponding asymptotic resource measure is non-increasing, which allows one to reversibly interconvert between all resource states.

On the other hand, almost simultaneously to the asymptotic studies, the single-shot regime was investigated, where one aims at deciding whether it is possible to convert a single copy of an initial state into the final state. Here, probably the most famous contributions are the Nielsen's theorem~\cite{nielsen1999conditions} within the resource theory of entanglement, and more recently the family of second laws for the resource theory of thermodynamics~\cite{brandao2015second}. In general, in the single-shot regime simple asymptotic transformation rules get replaced by more complex sets of conditions, which also give rise to irreversible transformation.

In this paper we focus on the interconversion process in the intermediate regime, when the number of processed resource states is large, but finite. This way we aim at keeping the simplicity of the asymptotic analysis, but also at preserving the irreversible nature of single-shot regime. The first steps in this direction were recently made in Refs.~\cite{kumagai2017second}~and~\cite{chubb2017beyond} for the resource theories of entanglement and thermodynamics, where the corrections to asymptotic conversion rates were found in the scenario with a constant transformation error (i.e., in the small deviation regime~\cite{strassen62}). Here, we present a moderate deviation analysis~\cite{altug14} (see also~\cite{Chubb2017,cheng17} for applications in the quantum domain) of the interconversion problem within a unified framework that includes all resource theories for which the single-shot transformation rules can be expressed via majorisation or thermo-majorisation. This way we find finite-size corrections to conversion rates in resource theories of entanglement~\cite{horodecki2009quantum}, coherence~\cite{baumgratz2014quantifying} and thermodynamics~\cite{horodecki2013fundamental}, in the regime where the transformation error, measured by either infidelity or total variation distance, asymptotically vanishes. 

Our results can be directly applied to the study of important problems such as entanglement distillation~\cite{kwiat2001experimental} or coherence dilution~\cite{winter2016operational}, but also allow one for a rigorous analysis of the irreversibility arising when finite-size resources are interconverted. Most intriguingly, we find that if a pair of states satisfies a particular resonance condition, one can achieve lossless interconversion, i.e., transformation that is arbitrarily close to reversible even for finite $n$. In the accompanying paper~\cite{korzekwa2018avoiding} we discuss how this effect can be employed to avoid irreversibility, which directly affects, e.g., the performance of heat engines working with finite-size working bodies~\cite{tajima2017finite}.

This paper is structured in the following way. First, in Sec.~\ref{sec:setting}, we set the scene by introducing necessary tools and concepts. Next, in Sec.~\ref{sec:result} we state our main result concerning moderate deviation corrections to the asymptotic interconversion rates for majorisation-based resource theories. We then proceed to Sec.~\ref{sec:tail_bounds} that contains auxiliary technical results concerning tail bounds, which are used in the formal proof that can be found in Sec.~\ref{sec:proof}. Finally, we provide conclusions and outlook in Sec.~\ref{sec:outlook}.

\section{Setting the scene}
\label{sec:setting}

\subsection{Resource theories in different regimes}

Every quantum resource theory~\cite{coecke2016mathematical,chitambar2018quantum} is defined by a set of quantum operations that are considered free, and a set of restrictions that make other operations impossible without an additional cost. Such restrictions may arise from practical difficulties, e.g., when preparing a system in a superposition of particular states is experimentally challenging, but may also be of fundamental nature, as with the laws of thermodynamics constraining possible transformations to preserve energy and increase entropy. A resource is then defined as a quantum system that allows one to lift a given restriction. Typical examples of resources include an excited pure state that acts as a work storage, and thus can be used to reduce the entropy of another system (overcoming thermodynamic constraints~\cite{horodecki2013fundamental}); an entangled Bell pair, which can be used to teleport a quantum state (overcoming locality constraints~\cite{horodecki2009quantum}); or a system in the superposition of energy eigenstates, which can be used as a reference frame for time (overcoming symmetry constraints~\cite{bartlett2007reference}).

Once the restrictions and the corresponding resources are defined, the central question concerns resource interconversion, i.e., what final states can be obtained from the initial one. This problem so far was mainly approached in either the \emph{single-shot regime}~\cite{gour2017quantum_resource}, or in an idealised \emph{asymptotic limit}~\cite{brandao2015reversible}. The first approach, due to its generality and the corresponding complexity of the answer, provides only a limited insight into the nature of different resource states. The second one provides an elegant and simple answer in the form of optimal \emph{conversion rate}, which tells us how many copies of the final state can be obtained per one copy of the initial state, if one assumes access to an infinite source of initial states. From a practical point of view, however, such an assumption is unjustified, as most quantum resources will be available only in small amounts in the foreseeable future. More fundamentally, finite-size effects may be of interest themselves, as it is the case within quantum thermodynamics~\cite{goold2016role}, where one aims at accurate description of heat and work processes involving small number of particles.

Very recently the first steps have been made to study the intermediate regime, where one focuses on the interconversion of large but finite number $n$ of resource states. First, in Ref.~\cite{kumagai2017second} the authors focused on transformations within the resource theory of entanglement. Their results were then generalised and adapted to the studies of the interconversion process in the resource theory of thermodynamics by the present authors~\cite{chubb2017beyond}. In both these works the second-order correction to the asymptotic rate was found in the so-called \emph{small deviation} regime~\cite{strassen62}, where the conversion rate approaches the asymptotic one for $n\rightarrow\infty$, but the transformation is realised with a constant error. In the current work we solve the issue of constant error by deriving corrections to the asymptotic rate in the \emph{moderate deviation} regime~\cite{altug14}, where the correction term still vanishes as $n\to\infty$, but also the transformation is asymptotically error-free. For the completeness of discussion, we also note that the interconversion problem may be studied in the \emph{large deviation} regime~\cite{dembo98}, where the error is exponentially vanishing for the price of the constant gap between the realised conversion rate and the asymptotic one. In Table~\ref{table:regimes} we collect references to central results concerning state interconversion within the investigated resource theories in various regimes.

\begin{table}
	\renewcommand{\arraystretch}{1.2}
	\begin{tabular}{c|c|c|c|}	
		\cline{2-4}
		& ~~~Ent.~~~ &~~~ Coh.~~~ &~Thermo.~~\\	
		\cline{1-4}
		\multicolumn{1}{|c|}{$n=1$, $\epsilon=0$}	& \cite{nielsen1999conditions} & \cite{du2015conditions}& \cite{horodecki2013fundamental}\\
		\cline{1-4}
		\multicolumn{1}{|c|}{$n\to\infty$, $\epsilon\to 0$}		& \multicolumn{2}{|c|}{\cite{bennett1996concentrating}} & \cite{brandao2013resource}\\
		\cline{1-4}
		\multicolumn{1}{|c|}{$n<\infty$, $\epsilon>0$}		& \multicolumn{2}{|c|}{\cite{kumagai2017second}} & \cite{chubb2017beyond}\\
		\cline{1-4}
		\multicolumn{1}{|c|}{$n<\infty$, $\epsilon\to 0$}	&	\multicolumn{3}{|c|}{This work}\\	
		\cline{1-4}
	\end{tabular}
	\caption{\emph{Interconversion in various regimes.} Exposition of works on state interconversion within resource theories of ent(anglement), coh(erence) and thermo(dynamics) in single-shot regime ($n=1,\epsilon=0$), asymptotic limit \mbox{($n\to\infty,\epsilon\to 0$)}, small deviation regime ($n<\infty,\epsilon>0$) and moderate deviation regime ($n<\infty,\epsilon\to 0$).}
	\label{table:regimes}
\end{table}

\subsection{Exact single-shot interconversion}

Irrespective of the investigated regime, the first step is to find single-shot interconversion rules, which form the basis of further analysis. In this work we study the interconversion problem within all majorisation-based resource theories, i.e., when conditions for single-shot transformations can be expressed as majorisation partial order~\cite{marshall1979inequalities}, or a variant known as thermo-majorisation~\cite{ruch1978mixing,horodecki2013fundamental}. Within such theories, each resource state can be represented by a probability distribution, and the conversion process is possible when the distribution representing the initial state majorises (or is majorised) by the distribution representing the final state, with majorisation $\succ$ defined by
\begin{equation}
	\v{a}\succ\v{b}\quad\Longleftrightarrow\quad \forall j:~\sum_{i=1}^j a_i^\downarrow\geq \sum_{i=1}^j b_i^\downarrow,
\end{equation}
where $\v{a}^\downarrow$ denotes the vector $\v{a}$ in a decreasing order.

Three prominent examples of majorisation-based resource theories include the resource theories of entanglement, coherence and thermodynamics. These are defined via the relevant sets of free operations and free states: Local Operations and Classical Communication (LOCC) and separable states in entanglement theory~\cite{horodecki2009quantum}; Incoherent Operations and incoherent states in coherence theory~\cite{baumgratz2014quantifying}; Thermal Operations and the thermal equilibrium state~$\gamma$ in the resource theory of thermodynamics (with respect to a fixed background temperature $T=1/\beta$)~\cite{janzing2000thermodynamic}. As mentioned above, within each of these theories there exists a representation of initial and target quantum states, $\rho$ and $\sigma$, as probability distributions $\v{p}$ and $\v{q}$. For entanglement theory, given initial and target pure bipartite states, $\rho=\ketbra{\Psi}{\Psi}$ and $\sigma=\ketbra{\Phi}{\Phi}$, with the Schmidt decomposition given by
\begin{equation}
	\ket{\Psi}=\sum_i a_i \ket{\psi_i\psi_i},\quad \ket{\Phi}=\sum_i b_i \ket{\phi_i\phi_i},
\end{equation}
we can represent them via probability distributions
\begin{equation}
	\label{eq:ent:representation}
	p_i=|a_i|^2,\quad q_i=|b_i|^2.
\end{equation}
For coherence theory, with respect to a fixed basis $\{\ket{i}\}$, one can represent pure initial and target states, \mbox{$\rho=\ketbra{\psi}{\psi}$} and $\sigma=\ketbra{\phi}{\phi}$, using 
\begin{equation}
	\label{eq:coh:representation}
	p_i=|\braket{i}{\psi}|^2,\quad q_i=|\braket{i}{\phi}|^2.
\end{equation}
Finally, in the resource theory of thermodynamics, the initial and target energy-incoherent mixed states $\rho$ and $\sigma$ can be represented by
\begin{equation}
	\label{eq:thermo:representation}
	p_i=\matrixel{E_i}{\rho}{E_i},\quad q_i=\matrixel{E_i}{\sigma}{E_i},
\end{equation}
where $\{\ket{E_i}\}$ denotes the energy eigenbasis of the system. We will denote distributions representing free states by $\v{f}$. In entanglement and coherence theories these are represented by sharp probability distributions $\v{s}$ with a single non-zero entry; whereas in the thermodynamic case $\v{f}$ is given by a thermal Gibbs distribution $\v{\gamma}$ with $\gamma_i\propto\exp(-\beta E_i)$. 

The celebrated Nielsen's theorem~\cite{nielsen1999conditions} (for entanglement) and the recent result of Ref.~\cite{du2015conditions} (for coherence) state that the initial state represented by $\v{p}$ can be transformed into the target state represented by $\v{q}$ if and only if $\v{p}\prec\v{q}$. Similarly, in Ref.~\cite{horodecki2013fundamental}, it was found that a thermodynamic transformations between states represented by $\v{p}$ and $\v{q}$ is possible if and only if $\hat{\v{p}}\succ\hat{\v{q}}$, where $\hat{\v{a}}$ can be obtained from $\v{a}$ via a straightforward application of an \emph{embedding map} $\Gamma^\beta$~\cite{brandao2015second,lostaglio2018thermodynamic}. For the sake of our analysis, it is only crucial to note that $\Gamma^\beta$ maps $d$-dimensional distributions to $\hat{d}$-dimensional ones with $\hat{d}\geq d$; and that an embedded version of the free thermal distribution is given by a maximally mixed distribution on a larger subspace, i.e., $\hat{\v{\gamma}}=\v{\eta}$ with $\v{\eta}=[1/\hat{d},\dots,1/\hat{d}]$.

\subsection{Approximate multi-copy interconversion}
\label{subsec:multicopy}

When considering transformations between many copies of initial and target states, represented by $\v{p}^{\otimes n}$ and $\v{q}^{\otimes m}$, we need to make sure that the dimensionality of the input and output spaces match. Since one can always append any number of free states $\v{f}$ to both the initial and target states, we introduce total initial and target distributions,
\begin{equation}
	\v{P}^{n,m}:=\v{p}^{\otimes n}\otimes\v{f}^{\otimes m},\quad \v{Q}^{n,m}:=\v{q}^{\otimes m}\otimes\v{f}^{\otimes n}.
\end{equation}
Our main object of interest will be the \emph{conversion rate} $r_n:=m/n$, i.e., the number of target states one can obtain per one copy of the initial state. For notational clarity we will denote total initial and target distributions by $\v{P}^{n}$ and $\v{Q}^{n}$, with the dependence on $m$ (so, in fact, on $r_n$) kept implicit. The single-shot interconversion conditions can now be expressed as $\v{P}^n\prec\v{Q}^n$ for the entanglement and coherence transformations, and $\hat{\v{P}}^n\succ\hat{\v{Q}}^n$ for the thermodynamic transformations.

We also need to introduce the concept of approximate interconversion. Assume that for given $\v{P}^n$ and $\v{Q}^n$ the relevant majorisation relation does not hold, so that the interconversion is impossible. However, there may exist $\tilde{\v{Q}}^n$ that is $\epsilon$-close to $\v{Q}^n$ and such that the interconversion is possible. We then say that an approximate transformation is possible with the error level $\epsilon$ quantified by either the infidelity, $1-F$, or total variation distance (TVD),~$\delta$, between target and final states, with
\begin{subequations}
	\begin{align}
		F(\v{Q}^n,\tilde{\v{Q}}^n)&:=\left(\sum_i \sqrt{Q^n_i \tilde{Q}^n_i}\right)^2,\\
		\delta(\v{Q}^n,\tilde{\v{Q}}^n)&:=\frac{1}{2}\sum_i \left|Q^n_i-\tilde{Q}^n_i\right|.
	\end{align}
\end{subequations}
The concept of approximate interconversion gives rise to two notions of approximate majorisation introduced in Ref.~\cite{chubb2017beyond}, $\epsilon$-post-majorisation $\succ_\epsilon$ and $\epsilon$-pre-majorisation~$\prescript{}{\epsilon}{\succ}$, defined by
\begin{subequations}
	\begin{align}
		\v{a}\succ_\epsilon\v{b}&\quad\Longleftrightarrow\quad \exists\, \mathrlap{\tilde{\v{b}}}\phantom{\tilde{\v a}}:~\mathrlap{\v{a}}\phantom{\tilde{\v a}}\succ\tilde{\v{b}}~\mathrm{and}~\delta(\mathrlap{\v{b}}\phantom{\v a},\mathrlap{\tilde{\v{b}}}\phantom{\tilde{\v a}})\leq\epsilon,\\
		\v{a}\prescript{}{\epsilon}{\succ}~\v{b}&\quad\Longleftrightarrow\quad \exists\, \tilde{\v{a}}:~\tilde{\v{a}}\succ\mathrlap{\v{b}}\phantom{\tilde{\v b}}~\mathrm{and}~\delta(\v{a},\tilde{\v{a}})\leq\epsilon,
	\end{align}
\end{subequations}
where, depending on the context, $\delta$ can be replaced by $1-F$. Crucially, in Ref.~\cite{chubb2017beyond} the present authors showed that these two notions are equivalent and, moreover, that $\epsilon$-post-majorisation between embedded vectors, $\hat{\v{a}}\succ_\epsilon\hat{\v{b}}$, is a necessary and sufficient condition for the existence of an approximate thermodynamic transformation between $\v{a}$ and $\v{b}$ with error level $\epsilon$.

We conclude that an approximate transformation between initial and target states, represented by $\v{p}^{\otimes n}$ and $\v{q}^{\otimes nr_n}$, is possible within resource theories of entanglement and coherence if and only if
\begin{equation}
	\v{P}^n\prec_\epsilon\v{Q}^n,
\end{equation}
with the free state $\v{f}=\v{s}$. We will refer to the above relation as the approximate majorisation relation for the \emph{entanglement direction}. Similarly, such a transformation is possible within resource theory of thermodynamics if and only if
\begin{equation}
	\hat{\v{P}}^n\succ_\epsilon\hat{\v{Q}}^n,
\end{equation}
with the free state $\v{f}=\v{\gamma}$. We will refer to this relation as the approximate majorisation relation for the \emph{thermodynamic direction}.

\subsection{Information-theoretic notions}

The main role in the quantitative analysis of the interconversion process for the entanglement direction will be played by the Shannon entropy $H$ and entropy variance $V$. For a given probability distribution $\v{a}$ these are defined by
\begin{subequations}
	\begin{align}
		H(\v{a})&=-\sum_i a_i\ln a_i,\\
		V(\v{a})&=\sum_i a_i\left[\ln a_i+H(\v{a})\right]^2.
	\end{align}
\end{subequations}
The analogous role for the thermodynamic direction will be played by the relative entropy $D$ and relative entropy variance $V$. Given two probability distributions, $\v{a}$ and $\v{b}$, these are defined by
\begin{subequations}
	\begin{align}
		D(\v{a}||\v{b})&=\sum_i a_i\ln \frac{a_i}{b_i},\\
		V(\v{a}||\v{b})&=\sum_i a_i\left[\ln \frac{a_i}{b_i}-D(\v{a}||\v{b})\right]^2\!\!\!.
	\end{align}
\end{subequations}
An important fact, that can be verified by direct calculation, is that the relative quantities are invariant under embedding, i.e., \mbox{$D(\v{a}||\v{b})=D(\hat{\v{a}}||\hat{\v{b}})$} and \mbox{$V(\v{a}||\v{b})=V(\hat{\v{a}}||\hat{\v{b}})$}~\cite{chubb2017beyond}.

In order to formally state our main result we also need to introduce the notion of a \emph{moderate sequence}:

\begin{defn}[Moderate sequence]
	\label{def:moderate}
	A sequence of real numbers $\{t_n\}_n$ is a \emph{moderate sequence} if its scaling is strictly between $1/\sqrt{n}$ and $1$, meaning that $t_n\to 0$ and $\sqrt{n} t_n\to +\infty$ as $n\to\infty$.
\end{defn}

\noindent Note that an important family of moderate sequences is given by \mbox{$t_n\sim n^{-\alpha}$} for $\alpha\in(0,1/2)$, which can be used to obtain a particularly simple version of our main results. 

Finally, as we will be interested in asymptotic expansions in $n$, we will employ the standard asymptotic notation: $o(f(n))$, $O(f(n))$ and $\Theta(f(n))$. We will also use $\ev{>}$ and $\ev{<}$ to denote eventual inequalities, specifically we write $a_n\ev{>}b_n$ if and only if there exists $N$ such that $a_n>b_n$ for all $n\geq N$. Moreover, we will denote equalities and inequalities up to terms of order $o(t_n)$ by $\simeq$, $\lesssim$ and $\gtrsim$.

\section{Interconversion rates beyond the asymptotic regime}
\label{sec:result}

We are now ready to state our central technical result, which may be of interest outside the resource-theoretic studies due to ubiquity of majorisation partial order in the broad field of applied mathematics~\cite{marshall1979inequalities}. We split it into three theorems. The first two concern state interconversion below the asymptotic rate and with asymptotically vanishing error (one for each majorisation direction). The third one concerns practically less relevant scenario of state interconversion above the asymptotic rate and with error asymptotically approaching 1. 

For the entanglement direction we introduce the \emph{optimal conversion rate} $R^{\mathrm{ent}}_n(\epsilon)$ as the largest conversion rate $r_n$ for which the approximate majorisation relation for the entanglement direction, $\v{P}^n\prec_\epsilon\v{Q}^n$, holds. Due to the discussion presented in Sec.~\ref{sec:setting}, $R^{\mathrm{ent}}_n(\epsilon)$ is the maximal rate for which the approximate interconversion, with error $\epsilon$, is possible between states represented by $\v{p}$ and $\v{q}$ within resource theories of entanglement and coherence. We also define the \emph{asymptotic rate},
\begin{equation}
	R_\infty^\mathrm{ent}=\frac{H(\v{p})}{H(\v{q})},
\end{equation}
and the \emph{irreversibility parameter},
\begin{equation}
	\label{eq:nu_ent}
	\nu^\mathrm{ent}=\frac{V(\v{p})/H(\v{p})}{V(\v{q})/H(\v{q})}.
\end{equation}
We then have:

\begin{thm}[Entanglement direction]
	\label{thm:majorisation_ent}
	For any moderate sequence $t_n$ and the accepted error level of
	\begin{equation}
		\label{eq:error_ent}
		\epsilon_n= e^{-n t_n^2},
	\end{equation}
	the asymptotic expansion of the optimal conversion rate $R_n^{\mathrm{ent}}(\epsilon_n)$ is 		
	\begin{align}
		\label{eq:rate_ent}
		R_n^{\mathrm{ent}}(\epsilon_n)&\simeq R^{\mathrm{ent}}_\infty-\sqrt{\frac{2V(\v{p})}{H(\v{q})^2}}\left|1- 1/\sqrt{\nu^{\mathrm{ent}}}\right| t_n.
	\end{align}		
\end{thm}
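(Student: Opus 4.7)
The plan is to establish Theorem~\ref{thm:majorisation_ent} by turning the approximate majorisation condition into a family of tail inequalities for the information-spectrum random variables of $\v{p}^{\otimes n}$ and $\v{q}^{\otimes n r_n}$, and then applying the moderate-deviation principle to locate the binding tail and optimise over $r_n$. I first note that the free state $\v{s}$ is sharp, so the padded distributions $\v{P}^n$ and $\v{Q}^n$ have (up to relabeling and trivial zero-padding) the same non-zero entries as $\v{p}^{\otimes n}$ and $\v{q}^{\otimes n r_n}$ respectively. Consequently $\v{P}^n \prec_{\epsilon_n}\v{Q}^n$ reduces to the existence of some $\tilde{\v{Q}}^n$ satisfying $\v{p}^{\otimes n}\prec \tilde{\v{Q}}^n$ and $\delta(\v{q}^{\otimes n r_n},\tilde{\v{Q}}^n)\leq\epsilon_n$; the task is to find the largest rate $r_n$ for which such a $\tilde{\v{Q}}^n$ exists.

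My first step is to invoke the tail-bound machinery of Section~\ref{sec:tail_bounds}, which recasts the existence of such a $\tilde{\v{Q}}^n$ as a uniform bound of size $\epsilon_n$ on the gap between the Lorenz curves of $\v{p}^{\otimes n}$ and $\v{q}^{\otimes n r_n}$ at every abscissa. For a product distribution the Lorenz curve is naturally parametrised by a log-probability threshold: setting $U_n:=-\ln \v{p}^{\otimes n}(X)$ and $V_m:=-\ln \v{q}^{\otimes m}(Y)$, which are i.i.d.\ sums of $n$ and $m$ terms with means $nH(\v{p}),\, mH(\v{q})$ and variances $nV(\v{p}),\, mV(\v{q})$, the cumulative Lorenz masses at threshold $\lambda$ are exactly $\Pr[U_n\leq\lambda]$ and $\Pr[V_m\leq\lambda]$.

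My second step is to plug in the ansatz $r_n = R_\infty^{\mathrm{ent}} - a\, t_n$ together with $\epsilon_n = e^{-nt_n^2}$, and apply the moderate-deviation principle $-\tfrac{1}{n s_n^2}\ln\Pr[(S_n-n\mu)/\sigma\geq \sqrt{n}\, s_n]\to \tfrac12$, valid for any moderate sequence $s_n$, to the tails of $U_n$ and $V_{nr_n}$. Equating the $\epsilon_n$-budget with these tails pins the critical log-probability threshold to a position $\sqrt{2n}\cdot t_n$ standard deviations into the tail, and matching the two Lorenz curves there (after cancellation of the leading relation $nH(\v{p}) = nr_n H(\v{q}) + a H(\v{q})\, n t_n$) reduces to a scalar equation for $a$ with solution
\[
a \;=\; \frac{\sqrt{2}}{H(\v{q})}\bigl|\sqrt{V(\v{p})}-\sqrt{R_\infty^{\mathrm{ent}} V(\v{q})}\bigr| \;=\; \sqrt{\tfrac{2V(\v{p})}{H(\v{q})^2}}\,\bigl|1 - 1/\sqrt{\nu^{\mathrm{ent}}}\bigr|.
\]
The absolute value arises because the binding constraint switches from the upper to the lower side of the typical set depending on the sign of $\sqrt{V(\v{p})}-\sqrt{R_\infty^{\mathrm{ent}} V(\v{q})}$; the two cases meet precisely at the resonance $\nu^{\mathrm{ent}}=1$, where both sides of the constraint become slack and no rate correction is required at order $t_n$.

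The main technical hurdle I expect lies in controlling the $o(t_n)$ corrections uniformly across the log-probability axis: because $\epsilon_n$ is prescribed as a sharp exponential rather than merely an order-of-magnitude, the polynomial prefactors appearing in the Gaussian-type moderate-deviation expansions of the tails of $U_n$ and $V_{nr_n}$ must be shown not to perturb the rate at moderate order. This will require invoking the sharp two-sided versions of the tail bounds from Section~\ref{sec:tail_bounds}, so that the converse direction (ruling out any larger coefficient $a$ via a matching tail lower bound) pairs correctly with the achievability direction, which proceeds by explicitly constructing $\tilde{\v{Q}}^n$ through truncating and renormalising the Lorenz curve of $\v{q}^{\otimes n r_n}$ at the critical threshold identified above.
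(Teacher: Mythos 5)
Your sketch captures the right analytic skeleton: reduce the problem to moderate-deviation tail bounds on the log-probability random variables of $\v{p}^{\otimes n}$ and $\v{q}^{\otimes nr_n}$, locate the critical threshold where the $\epsilon_n$-budget is spent, and match the two tails to solve for the rate correction; your formula for the coefficient $a$ is algebraically consistent with Eq.~\eqref{eq:rate_ent}. The route differs from the paper's in one organisational respect, though: the paper first proves the $\succ_\epsilon$ (thermodynamic) direction with $\v{f}=\v\eta$ via a cut-and-pile perturbation of $\v{P}^n$ (remove tail mass beyond a threshold index and pile it onto the largest entry), and then obtains \cref{thm:majorisation_ent} from \cref{prop:thermo} by swapping $\v{p}\leftrightarrow\v{q}$, invoking the equivalence of $\epsilon$-pre- and $\epsilon$-post-majorisation, and substituting $n\leftarrow nR_n$, $t_n\leftarrow t_n/\sqrt{R_n}$, $R_n\leftarrow 1/R_n$. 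You instead propose proving $\prec_\epsilon$ directly by perturbing $\v{q}^{\otimes nr_n}$. Both are viable and rest on the same tail-bound lemmas; the paper's indirection lets it prove \cref{thm:majorisation_ent,thm:majorisation_th} simultaneously from one argument, whereas yours is conceptually shorter if only the entanglement statement were wanted.

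Two places where the sketch misjudges the technical content. First, the worry about ``polynomial prefactors'' is misdirected: \cref{lem:moderate} is already a $\lim\tfrac{1}{nt_n^2}\ln(\cdot)$ statement, so every subexponential prefactor is absorbed automatically; the slack you actually need is supplied by a $\zeta>0$ padding in the choice of the rate offset, sent to zero at the end. Second, and more importantly, the genuine subtleties lie elsewhere and are only partially visible in your sketch. (i)~The majorisation inequalities are indexed by \emph{rank} $k$, whereas your Lorenz parametrisation is by log-probability \emph{threshold}; passing between them is not free and requires the cardinality argument of \cref{lem:majorisation_tail}. (ii)~Between the regions controlled by the ``pile'' (small $k$) and the ``cut'' (large $k$), you must show the dominance $\sum_{i\leq k}\tilde Q_i \geq \sum_{i\leq k}P_i^n$ holds eventually \emph{uniformly} over the whole compact intermediate range of indices; the moderate-deviation limits give only pointwise convergence of the rescaled cumulative distributions, and promoting pointwise to compact convergence requires the P\'olya-type argument of \cref{lem:eventually greater} in Appendix~\ref{app:uniformity}, invoked inside \cref{lem:dominance}. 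Neither issue is fatal to your plan, but both must be handled explicitly; your sketch gestures at a uniformity problem without identifying that it is a compact-convergence issue rather than a prefactor issue.
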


Analogously, for the thermodynamic direction we introduce the \emph{optimal conversion rate} $R^{\mathrm{th}}_n(\epsilon)$ as the largest conversion rate $r_n$ for which the approximate majorisation relation for the thermodynamic direction, \mbox{$\hat{\v{P}}^n\succ_\epsilon \hat{\v{Q}}^n$}, holds. As before, $R^{\mathrm{th}}_n(\epsilon)$ is the maximal rate for which the approximate interconversion, with error $\epsilon$, is possible between states represented by $\v{p}$ and $\v{q}$ within the resource theory of thermodynamics. We also define the \emph{asymptotic rate},
\begin{equation}
	R_\infty^\mathrm{th}=\frac{D(\v{p}||\v{\gamma})}{D(\v{q}||\v{\gamma})},
\end{equation}
and the \emph{irreversibility parameter},
\begin{equation}
	\label{eq:nu_th}
	\nu^\mathrm{th}=\frac{V(\v{p}||\v{\gamma})/D(\v{p}||\v{\gamma})}{V(\v{q}||\v{\gamma})/D(\v{q}||\v{\gamma})}.
\end{equation}
We then have:

\begin{thm}[Thermodynamic direction]
	\label{thm:majorisation_th}
	For any moderate sequence $t_n$ and the accepted error level of
	\begin{equation}
		\label{eq:error_th}
		\epsilon_n= e^{-n t_n^2},
	\end{equation}
	the asymptotic expansion of the optimal conversion rate $R_n^{\mathrm{th}}(\epsilon_n)$ is	
	\begin{align}
		\label{eq:rate_th}
		R_n^{\mathrm{th}}(\epsilon_n)&\simeq R^{\mathrm{th}}_\infty-\sqrt{\frac{2V(\v{p}||\v{\gamma})}{D(\v{q}||\v{\gamma})^2}}\left|1- 1/\sqrt{\nu^{\mathrm{th}}}\right| t_n.
	\end{align}		
\end{thm}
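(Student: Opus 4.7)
The plan is to mirror the proof of Theorem~\ref{thm:majorisation_ent}, exploiting the embedding invariances $D(\v{a}\|\v{b})=D(\hat{\v{a}}\|\hat{\v{b}})$ and $V(\v{a}\|\v{b})=V(\hat{\v{a}}\|\hat{\v{b}})$ to transport the thermodynamic constraint $\hat{\v{P}}^n\succ_{\epsilon_n}\hat{\v{Q}}^n$ into an ordinary majorisation problem on the enlarged sample space, where the entanglement-direction machinery can be re-used. First, I would invoke the tail-bound characterisation of Sec.~\ref{sec:tail_bounds} together with the TVD/infidelity equivalence of post- and pre-majorisation from Ref.~\cite{chubb2017beyond}; these translate $\succ_{\epsilon_n}$ into a family of comparisons of top-$k$ masses of $\hat{\v{P}}^n$ and $\hat{\v{Q}}^n$ with slack at most $\epsilon_n$, which are then recast as inequalities between quantiles of the log-likelihood ratios $X=\ln(\hat{P}^n_i/\eta_i)$ and $Y=\ln(\hat{Q}^n_i/\eta_i)$ under the respective distributions.

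Second, the i.i.d.\ product structure of $\hat{\v{P}}^n$ combined with $\hat{\v{\gamma}}=\v{\eta}$ identifies $X$ with the i.i.d.\ sum $\sum_{k=1}^n\ln(p_{i_k}/\gamma_{i_k})$, whose mean is $nD(\v{p}\|\v{\gamma})$ and variance $nV(\v{p}\|\v{\gamma})$; likewise $Y$ reduces to an i.i.d.\ sum of $m=nr_n$ terms with mean $mD(\v{q}\|\v{\gamma})$ and variance $mV(\v{q}\|\v{\gamma})$. The padding copies of $\hat{\v{\gamma}}$ embed into a uniform distribution and so contribute no extra randomness to these statistics.

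Third, I would invoke the moderate-deviation principle for i.i.d.\ sums,
\begin{equation*}
\Pr\!\Bigl(\sum_k X_k \leq n\mu - \sqrt{2nV}\,s_n\Bigr) = \exp\!\bigl(-s_n^2(1+o(1))\bigr),
\end{equation*}
valid for any moderate sequence $s_n$, to locate the $\epsilon_n$-quantiles. Calibrating $s_n=\sqrt{n}\,t_n$ matches $\epsilon_n=e^{-nt_n^2}$ exactly and places the relevant quantile of $X$ at $nD(\v{p}\|\v{\gamma})-nt_n\sqrt{2V(\v{p}\|\v{\gamma})}$ and that of $Y$ at $mD(\v{q}\|\v{\gamma})-nt_n\sqrt{2r_n V(\v{q}\|\v{\gamma})}$, both up to $o(nt_n)$ corrections.

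Finally, the critical rate $R_n^{\mathrm{th}}(\epsilon_n)$ is obtained by equating these two quantiles, with the $\epsilon_n$-slack distributable between the two tails so that the binding constraint selects the sign of $\sqrt{2V(\v{p}\|\v{\gamma})}-\sqrt{2r_n V(\v{q}\|\v{\gamma})}$. Dividing through by $nD(\v{q}\|\v{\gamma})$, substituting $r_n\to R_\infty^{\mathrm{th}}$ inside the correction and re-expressing via $\nu^{\mathrm{th}}$ yields Eq.~\eqref{eq:rate_th}, the absolute value reflecting which of the two tails is active. The main technical challenge will be to propagate the $(1+o(1))$ in the MDP uniformly through the quantile-matching step so that achievability and converse close at the same coefficient of $t_n$; the resonance condition $\nu^{\mathrm{th}}=1$ then appears precisely where the two MDP shifts coincide, causing the order-$t_n$ correction to vanish and signalling lossless interconversion at sub-$t_n$ scale.
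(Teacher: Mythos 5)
Your high-level architecture matches the paper's: embed into the infinite-temperature picture so that the free state becomes uniform, reduce the thermo-majorisation condition to plain majorisation of product distributions, and invoke a moderate-deviation principle for the i.i.d.\ log-likelihood sums to locate the critical rate. The identification of the means $nD(\v{p}\|\v{\gamma})$, $mD(\v{q}\|\v{\gamma})$ and variances $nV(\v{p}\|\v{\gamma})$, $mV(\v{q}\|\v{\gamma})$, and the observation that the padding copies of $\hat{\v{\gamma}}$ contribute no randomness, are all exactly as in the paper.

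However, there are two genuine gaps. First, the MDP naturally controls \emph{magnitude-thresholded} tail sums of $\v{a}^{\otimes n}$ (i.e.\ $\sum\{a^{\otimes n}_i : a^{\otimes n}_i \gtrless 1/k_n\}$), whereas the majorisation partial sums that define $\succ_\epsilon$ are \emph{rank-indexed} cumulative sums $\sum_{i\leq k}(\v{a}^{\otimes n})_i^\downarrow$ over the $k$ largest entries. You tacitly treat these as interchangeable. They are asymptotically equivalent in the moderate-deviation regime, but this is precisely the content of the paper's \cref{lem:majorisation_tail}, whose proof requires a non-trivial sandwiching argument ($\tilde S_n(x)\subseteq S_n(x)\subseteq \tilde S_n(x+\delta)$ eventually, for all $\delta>0$); without some version of this step your "top-$k$ mass comparison" is not justified. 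Second, "equating the two quantiles with the $\epsilon_n$-slack distributable between the tails" is too coarse a description of the argument that determines the $t_n$ coefficient. The paper separates achievability (an explicit cut-and-pile construction of $\bmtilde P^{(\mu)}$, which requires a three-way case split depending on the sign of $\mu$ and whether $\nu\lessgtr 1$, with the cut placed at $z_{\mathrm C}$ or $z_{\mathrm T}$ accordingly) from the converse (a coarse-graining/data-processing bound on the TVD of \emph{any} candidate $\bmtilde P^n$ majorising $\v{Q}^n$, evaluated at a carefully chosen binning index). The absolute value $|1-1/\sqrt{\nu^{\mathrm{th}}}|$ and the placement of the single surviving tail in the error exponent come precisely out of this case analysis — the "cis" versus "trans" crossing points $z_{\mathrm C}, z_{\mathrm T}$ of the two asymptotic CDFs — and not from a symmetric quantile-matching heuristic. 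A third, smaller issue: the pointwise MDP limits must be upgraded to compact convergence (the paper's \cref{lem:eventually greater}) for the dominance argument to hold uniformly over the index range, a step you do not mention.
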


Finally, one expects that conversion above the asymptotic rate leads to transformation error approaching 1. This is formalised in the following theorem which, unlike the previous two theorems (that hold for the error level measured by both infidelity and total variation distance), applies only to TVD. In Appendix~\ref{app:small}, where we relate our current results to the small deviation analysis of Refs.~\cite{kumagai2017second,chubb2017beyond}, we also conjecture the analogue of \cref{thm:converse} with the error measured by infidelity.

\begin{thm}[Converse regime]
	\label{thm:converse}
	For any moderate sequence $t_n$ and the accepted TVD error of
	\begin{equation}
		\label{eq:error_conv}
		\epsilon_n= 1-e^{-n t_n^2},
	\end{equation}
	the asymptotic expansion of the optimal conversion rate $R_n^{\mathrm{ent}}(\epsilon_n)$ is
	\begin{subequations}		
		\begin{align}
			\label{eq:rate_ent_conv}
			R_n^{\mathrm{ent}}(\epsilon_n)&\simeq R^{\mathrm{ent}}_\infty+\sqrt{\frac{2V(\v{p})}{H(\v{q})^2}}\left(1+ 1/\sqrt{\nu^{\mathrm{ent}}}\right)\,t_n,
		\end{align}	
	and similarly for $R_n^{\mathrm{th}}(\epsilon_n)$ we have
		\begin{align}
			\label{eq:rate_th_conv}
			R_n^{\mathrm{th}}(\epsilon_n)&\simeq R^{\mathrm{th}}_\infty+\sqrt{\frac{2V(\v{p}||\v{\gamma})}{D(\v{q}||\v{\gamma})^2}}\left(1+1/\sqrt{\nu^{\mathrm{th}}}\right)\,t_n.
		\end{align}	
	\end{subequations}	
\end{thm}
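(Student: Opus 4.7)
The plan is to parallel the proofs of \cref{thm:majorisation_ent,thm:majorisation_th}, adapting the moderate-deviation tail estimates to the regime where the TVD error tends to one. The key reformulation is that $\delta(\v{Q}^n,\tilde{\v{Q}}^n)\leq\epsilon_n=1-e^{-nt_n^2}$ is equivalent to requiring the overlap $\sum_i\min(Q^n_i,\tilde{Q}^n_i)\geq e^{-nt_n^2}$, so the ``small quantity'' $e^{-nt_n^2}$ now lives on the overlap rather than on the error itself---a feature that is specific to TVD and has no clean fidelity analogue.

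First I would recast the approximate majorisation conditions $\v{P}^n\prec_{\epsilon_n}\v{Q}^n$ and $\hat{\v{P}}^n\succ_{\epsilon_n}\hat{\v{Q}}^n$ as CDF comparisons between two independent i.i.d.\ sums: for the entanglement direction these are $-\sum_{j}\ln p_{i_j}$ (mean $nH(\v{p})$, variance $nV(\v{p})$) together with the analogous $nr_n$-term sum on the $\v{q}$-side, while for the thermodynamic direction they are sums of log-likelihood ratios $\ln(p_{i_j}/\gamma_{i_j})$, whose mean and variance are invariant under the embedding $\Gamma^\beta$. Combining this with the reformulation of approximate majorisation used in \cref{thm:majorisation_ent,thm:majorisation_th} and the tail-bound toolbox of \cref{sec:tail_bounds}, the question reduces to locating the points at which the Gaussian-regime tails of these two sums ``meet'' at probability $e^{-nt_n^2}$.

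By the moderate deviation principle, this probability level corresponds to a deviation of $n\sqrt{2V}\,t_n$ from the mean on each side. Unlike the forward regime, where the $\v{p}$- and $\v{q}$-side tails must be pushed in opposite directions so that their widths partially cancel and produce $|1-1/\sqrt{\nu}|$, here the converse regime forces both deviations in the \emph{same} direction so that they add constructively. Setting $r_n=R_\infty^{\mathrm{ent}}+c\,t_n$ and matching the tail positions then yields, to leading order, $c=\sqrt{2V(\v{p})}/H(\v{q})+\sqrt{2R_\infty^{\mathrm{ent}}V(\v{q})}/H(\v{q})$, which simplifies via \cref{eq:nu_ent} to $\sqrt{2V(\v{p})/H(\v{q})^2}\,(1+1/\sqrt{\nu^{\mathrm{ent}}})$; the same calculation carried out in the embedded picture, using $D$ and $V(\cdot\|\v{\gamma})$ in place of $H$ and $V$, gives the thermodynamic expansion.

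The main obstacle is precisely the TVD-vs-infidelity asymmetry. The overlap $\sum_i\min(a_i,b_i)$ factorises cleanly into a comparison between tails of two independent i.i.d.\ sums, which is what makes the strong-converse moderate deviation estimate go through; the fidelity overlap $\sum_i\sqrt{a_ib_i}$ does not admit such a bipartite tail interpretation when the error is close to one, which is why only a conjectured fidelity version is offered. A secondary technical point is matching the achievability and converse bounds to $o(t_n)$ precision: one has to verify that the tail estimates of \cref{sec:tail_bounds} remain two-sided tight throughout the moderate deviation window, rather than just sharp up to exponential constants, so that the resulting $R_n(\epsilon_n)$ is pinned down to the precision stated in \eqref{eq:rate_ent_conv} and \eqref{eq:rate_th_conv}.
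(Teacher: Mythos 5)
Your plan is essentially the paper's argument: Lemma~\ref{lem:PQ_tail} gives moderate-deviation tail exponents for $\bm P^n$ and $\bm Q^n$, and the converse regime corresponds to the ``trans'' crossing point $z_{\mathrm T}=\mu/(1+\sqrt\nu)$ where the two tails add constructively, which is exactly what produces the factor $1+1/\sqrt\nu$ in place of $|1-1/\sqrt\nu|$. The paper implements ``matching the tail positions'' concretely via the cut-and-pile construction for achievability (Case~3, $\mu>0$, in Lemma~\ref{lem:achieve1}) and a coarse-graining bound for optimality (Sec.~\ref{subsec:proof_opt}, Case~3), and closes the uniformity gap you flag at the end with the compact-convergence argument of Appendix~\ref{app:uniformity}; your overlap reformulation $\delta=1-\sum_i\min(Q_i^n,\tilde Q_i^n)$ is a valid alternative framing but isn't used directly. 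One small caveat on your fidelity remark: the obstruction is not merely the absence of a ``bipartite tail interpretation'' — the Fuchs–van de Graaf bound that made TVD and infidelity interchangeable in the direct regime becomes uninformative as $\epsilon_n\to 1$, and indeed the paper's Appendix~\ref{app:small} conjectures a quantitatively different prefactor $\sqrt{1+1/\nu}$ (rather than $1+1/\sqrt\nu$) for the infidelity converse, so the two error measures genuinely decouple here.
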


We present the proofs in Sec.~\ref{sec:proof}, after we introduce the necessary tools in Sec.~\ref{sec:tail_bounds}. Before that let us make two important remarks.
 
\begin{rmk}
	For initial and target states satisfying \mbox{$\nu^{\mathrm{ent}}=1$}, the optimal conversion rate $R_n^{\mathrm{ent}}$ in the regime of vanishing error is given by the asymptotic rate $R_\infty^{\mathrm{ent}}$. This means that, up to terms of order $o(t_n)$, such a transformation is reversible even for finite $n$. Analogous observation holds for the thermodynamic direction. We discuss the implications of this particularly interesting scenario in an accompanying paper~\cite{korzekwa2018avoiding}.	
\end{rmk}

\begin{rmk}
	When $V(\v{p})=0$, resulting in $1/\sqrt{\nu^{\mathrm{ent}}}$ diverging to infinity and the apparent multiplication of zero times infinity, one can simply use the definition of $\nu^{\mathrm{ent}}$ to replace Eq.~\eqref{eq:rate_ent} with
	\begin{align}
		R_n^{\mathrm{ent}}(\epsilon_n)&\simeq R^{\mathrm{ent}}_\infty\pm\sqrt{\frac{2V(\v{q})H(\v{p})}{H(\v{q})^3}} t_n.
			\end{align}			
	Analogous observation holds for the thermodynamic direction.
\end{rmk}

\section{Moderate deviation toolkit}
\label{sec:tail_bounds}

\subsection{Preliminaries}

The central result of the moderate deviation analysis can be stated as follows.

\begin{lem}[Moderate deviation bound]
	\label{lem:moderate}	
	Let $\{X_i\}_{1\leq  i\leq n}$ be independent and identically distributed (i.i.d.) random variables with zero-mean and variance $v$. For any moderate sequence $\{t_n\}_n$ the following hold:
	\begin{subequations}
		\begin{align}
			\label{eq:moderate_1}
			\lim_{n\rightarrow\infty} \frac{1}{nt_n^2}\ln\left[ \mathrm{Pr}\left(\frac{1}{n}\sum_{i=1}^n X_i\geq t_n \right)\right]=&-\frac{1}{2v},\\
			\label{eq:moderate_2}
			\lim_{n\rightarrow\infty} \frac{1}{nt_n^2}\ln\left[ \mathrm{Pr}\left(\frac{1}{n}\sum_{i=1}^n X_i\leq -t_n\right)\right]=&-\frac{1}{2v}.
		\end{align}
	\end{subequations}
\end{lem}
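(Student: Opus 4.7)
The plan is to establish both limits simultaneously, since the second limit follows from the first by replacing $X_i$ with $-X_i$ (another i.i.d.\ family with zero mean and variance $v$). Thus it suffices to prove Eq.~\eqref{eq:moderate_1}, by showing matching upper and lower bounds on the upper tail.

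For the upper bound I would invoke the exponential Markov (Chernoff) inequality: for any $\lambda>0$,
\begin{equation}
\mathrm{Pr}\!\left(\tfrac{1}{n}\sum_i X_i\geq t_n\right)\leq e^{-n\lambda t_n}\,\mathbb{E}[e^{\lambda X_1}]^n.
\end{equation}
Taylor-expanding the cumulant generating function $\Lambda(\lambda):=\ln\mathbb{E}[e^{\lambda X_1}]=\tfrac{v\lambda^2}{2}+O(\lambda^3)$ (valid in the resource-theoretic applications, where the relevant $X_i$ are bounded log-likelihood ratios and so the moment generating function exists in a neighbourhood of the origin) and optimising with $\lambda=t_n/v\to 0$ yields $\ln\mathrm{Pr}(\cdot)\leq -\tfrac{n t_n^2}{2v}+O(n t_n^3)$. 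Dividing by $n t_n^2$ and using $t_n\to 0$ produces the $\limsup$ half.

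For the matching lower bound I would use an exponential change of measure. Let $\tilde{P}$ be the tilted law with density $d\tilde{P}/dP=e^{\lambda X-\Lambda(\lambda)}$, under which $X_i$ has mean $\Lambda'(\lambda)$ and variance $\Lambda''(\lambda)\to v$ as $\lambda\to 0$. Choose $\lambda=\lambda_n$ so that $\Lambda'(\lambda_n)=t_n$, which forces $\lambda_n\sim t_n/v$. Rewriting the tail probability as an expectation under $\tilde{P}^{\otimes n}$ and restricting to the event $\{n t_n\leq S_n\leq n(t_n+\epsilon_n)\}$ for an auxiliary scale $\epsilon_n$, I obtain
\begin{equation}
\mathrm{Pr}(S_n\geq n t_n)\geq e^{-n\lambda_n(t_n+\epsilon_n)+n\Lambda(\lambda_n)}\,\tilde{P}\!\left(n t_n\leq S_n\leq n(t_n+\epsilon_n)\right).
\end{equation}
The exponential prefactor contributes $-\tfrac{n t_n^2}{2v}+o(n t_n^2)$ by the same cumulant expansion as before, while the tilted probability factor is $\Theta(1)$ by the central limit theorem under $\tilde{P}$, provided the window of width $n\epsilon_n$ around the tilted mean $n t_n$ spans many standard deviations, i.e.\ $\sqrt{n}\,\epsilon_n\to\infty$. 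Taking logarithms and dividing by $n t_n^2$ then delivers the matching $\liminf\geq -1/(2v)$.

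The main obstacle I anticipate is the calibration of the auxiliary scale $\epsilon_n$: it must simultaneously satisfy $\sqrt{n}\,\epsilon_n\to\infty$ (so the CLT window carries $\Theta(1)$ mass under $\tilde{P}$) and $\epsilon_n=o(t_n)$ (so the penalty $e^{-n\lambda_n\epsilon_n}$ is $o(n t_n^2)$ in the exponent). Both demands are compatible precisely because $\{t_n\}$ is moderate, so $\sqrt{n}\,t_n\to\infty$ — this is exactly where the hypothesis of Definition~\ref{def:moderate} enters. A secondary concern is that the $O(\lambda_n^3)$ remainder in the cumulant expansion and the Berry--Esseen-type remainder in the tilted CLT must both be absorbed into the $o(n t_n^2)$ error; under the boundedness (or merely sufficient-moment) assumption on the $X_i$ these corrections are uniformly controlled, and the argument closes.
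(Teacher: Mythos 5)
The paper does not prove this lemma directly; it simply cites Appendix~A of Ref.~\cite{Chubb2017}, which in turn appeals to the classical moderate deviation principle for i.i.d.\ sums (of the type found in Dembo--Zeitouni). Your proposal reconstructs exactly that standard argument: Chernoff bound plus a second-order cumulant expansion for the upper bound, and an exponential change of measure with a tilted CLT for the matching lower bound. The calibration of the auxiliary window width $\epsilon_n$ is handled correctly, and the observation that the two competing requirements $\epsilon_n = o(t_n)$ and $\sqrt{n}\,\epsilon_n\to\infty$ are jointly satisfiable precisely because $\sqrt{n}\,t_n\to\infty$ is the key place where moderacy of $\{t_n\}$ enters; a concrete choice such as $\epsilon_n=\sqrt{t_n/\sqrt{n}}$ witnesses this. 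The sign-flip reduction of Eq.~\eqref{eq:moderate_2} to Eq.~\eqref{eq:moderate_1} is also fine.

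One point worth making explicit: the lemma as stated assumes only finite variance, which is \emph{not} sufficient for the moderate deviation principle in full generality (heavy-tailed distributions with finite variance but no exponential moments have different moderate-deviation behaviour). Your proof tacitly requires $\Lambda(\lambda)=\ln\mathbb{E}[e^{\lambda X_1}]<\infty$ in a neighbourhood of the origin, and you correctly flag that in the paper's applications the $X_i$ are bounded log-likelihoods, so this holds. The cited reference carries the same implicit hypothesis. You also do not address the degenerate case $v=0$ (where the limit is $-\infty$ under the paper's convention $1/v=+\infty$), but the paper itself treats that separately by direct inspection, so this is not a gap relative to what the lemma is being used for.
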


The proof of the above lemma can be found, e.g., in Appendix~A of Ref.~\cite{Chubb2017}. For the remainder of the paper, consider $\lbrace t_n\rbrace_n$ to be a fixed moderate sequence. For clarity we will henceforth omit the dependence of all implicit constants on this sequence. It should be noted that the above lemma also holds when $v=0$, where we henceforth adopt the convention that $1/v=+\infty$ in this case.

\subsection{Two variations on tail bounds}

We now want to adapt \cref{lem:moderate} to our purposes of majorisation-based analysis. For a probability vector $\v{a}$ we thus introduce the following quantity
\begin{align}
	k_n(\v{a},x)&:=\exp\left(H(\bm a^{\otimes n})+x n t_n\right),
\end{align}
which allows us to formulate the magnitude-based version of the moderate deviation bound for products of distributions.

\begin{lem}[Magnitude-based tail bound]
	\label{lem:magnitude_tail}
	Consider an arbitrary probability distribution $\bm a$. For $x\leq 0$ we have
	\begin{subequations}
		\begin{align}
			\label{eq:magnitude_1}
			\lim\limits_{n\to\infty} \frac{1}{nt_n^2} \ln 
			\sum_{i}
			\left\lbrace 
			\left(\bm a^{\otimes n}\right)_i		
			\middle|
			\left(\bm a^{\otimes n}\right)_i\geq \frac{1}{k_n(\v{a},x)}		
			\right\rbrace
			=\frac{-x^2}{2V(\v{a})}	,
		\end{align}	
		and similarly for $x\geq 0$ we have
		\begin{align}
			\label{eq:magnitude_2}
			\lim\limits_{n\to\infty} \frac{1}{nt_n^2} \ln 
			\sum_{i}
			\left\lbrace 
			\left(\bm a^{\otimes n}\right)_i		
			\middle|
			\left(\bm a^{\otimes n}\right)_i\leq \frac{1}{k_n(\v{a},x)}		
			\right\rbrace
			=\frac{-x^2}{2V(\v{a})}.
		\end{align}	
	\end{subequations}
\end{lem}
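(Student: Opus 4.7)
The plan is to recast both sums as tail probabilities of i.i.d.\ sample means of log-probabilities, so that \cref{lem:moderate} applies directly after a rescaling of the moderate sequence. Let $I_1,\dots,I_n$ be i.i.d.\ samples from $\bm a$, and define $X_k := -\ln a_{I_k} - H(\bm a)$. By direct calculation $\mathbb{E}[X_k]=0$ and $\Var(X_k)=V(\bm a)$, so the $X_k$ satisfy the hypotheses of \cref{lem:moderate} with $v=V(\bm a)$. Writing out $k_n(\bm a, x)= e^{nH(\bm a)+xnt_n}$, the defining inequality $(\bm a^{\otimes n})_i \gtrless 1/k_n(\bm a,x)$ on an $n$-tuple $i=(i_1,\dots,i_n)$ is, after taking logarithms and dividing by $n$, equivalent to $\frac{1}{n}\sum_k X_k \lessgtr x t_n$. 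Since the sum on the left-hand side of each displayed equation simply aggregates $(\bm a^{\otimes n})_i$ over the $n$-tuples satisfying the constraint, it equals the corresponding probability under $\bm a^{\otimes n}$.

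Next I would invoke \cref{lem:moderate} on the rescaled moderate sequence $s_n := |x|\, t_n$. For any fixed $x \neq 0$, $\{s_n\}_n$ is again moderate (it satisfies $s_n \to 0$ and $\sqrt{n} s_n \to \infty$), so the lemma yields
\begin{equation}
    \lim_{n\to\infty} \frac{1}{n s_n^2} \ln \mathrm{Pr}\!\left( \pm \tfrac{1}{n}\sum_{k} X_k \geq s_n \right) = -\frac{1}{2V(\bm a)}.
\end{equation}
Rewriting $n s_n^2 = x^2 \, n t_n^2$ gives exactly the claimed limit $-x^2/(2V(\bm a))$. For $x\leq 0$ we use the lower-tail event from Eq.~\eqref{eq:moderate_2} applied to $s_n$; for $x\geq 0$ we use the upper tail from Eq.~\eqref{eq:moderate_1}. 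Each case matches the appropriate inequality on $(\bm a^{\otimes n})_i$.

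The only subtlety is the boundary value $x=0$. Here $s_n$ is not a moderate sequence, so \cref{lem:moderate} does not apply. However, in this case the sums reduce to $\mathrm{Pr}(\frac{1}{n}\sum_k X_k \leq 0)$ or $\mathrm{Pr}(\frac{1}{n}\sum_k X_k \geq 0)$, both of which are bounded below by a positive constant (e.g.\ by the central limit theorem, approaching $1/2$, or trivially equal to $1$ if $V(\bm a)=0$). Since $n t_n^2\to\infty$, dividing $\ln$ of a quantity bounded between two positive constants by $n t_n^2$ yields $0$, matching $-x^2/(2V(\bm a))|_{x=0}=0$. I would also explicitly handle the degenerate case $V(\bm a)=0$ under the convention $1/V(\bm a)=+\infty$: then $X_k\equiv 0$ a.s., so for $x<0$ the probability in Eq.~\eqref{eq:magnitude_1} is zero (limit $-\infty$), and symmetrically for Eq.~\eqref{eq:magnitude_2} with $x>0$, both in agreement with the stated right-hand side.

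The main obstacle is really only bookkeeping: one must check that the rescaling $t_n \mapsto |x|t_n$ preserves the moderate-sequence property and that the translation between the magnitude condition on $(\bm a^{\otimes n})_i$ and the sample-mean condition on $X_k$ is valid with matching directions of inequality for both $x\geq 0$ and $x\leq 0$. Once these are dispatched, the result is an immediate corollary of \cref{lem:moderate}.
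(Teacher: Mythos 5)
Your proposal is correct and follows essentially the same route as the paper: recast the constrained sum as a tail probability of i.i.d.\ log-probability variables and then invoke \cref{lem:moderate}. The only cosmetic difference is where the factor $x^2$ is produced---the paper rescales the random variables (setting $X_j=(L_j-\langle L\rangle)/x$, thereby changing the variance to $V(\bm a)/x^2$), whereas you keep $X_j=L_j-\langle L\rangle$ and instead rescale the moderate sequence to $s_n=|x|t_n$, which replaces $nt_n^2$ in the denominator by $x^2 nt_n^2$. These are interchangeable, and your verification that $s_n$ remains moderate for fixed $x\neq 0$, plus the separate treatment of $x=0$ via the CLT and of the degenerate case $V(\bm a)=0$, closes the same gaps the paper closes.
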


\begin{proof}
	Consider the random variable $L:=-\log a$, distributed according to $\v{a}$, such that the expectation value $\langle L\rangle$ and the variance \mbox{$\Var(L)$} are equal to $H(\bm a)$ and $V(\bm a)$ respectively. We can express $k_n$ in terms of $L$ as
	\begin{align}
		\log k_n(\v{a},x)=n\langle L\rangle_{\v{a}}+x n t_n.
	\end{align}
	If we let $\lbrace L_j\rbrace_{1\leq j\leq n}$ be i.i.d.\ copies of $L$, then we can write the tail bound of $\bm a^{\otimes n}$ in terms of tail bounds on the average of these variables, 
	\begin{align}
		&\sum_{i}
		\left\lbrace 
		\left(\bm a^{\otimes n}\right)_i 
		\middle|
		\left(\bm a^{\otimes n}\right)_i\geq \frac{1}{k_n(\v{a},x)}		
		\right\rbrace\notag\\
		&\quad=\sum_{i_1,\dots,i_n}\left\lbrace \prod_{j=1}^{n}a_{i_j} \middle|\prod_{j=1}^{n}a_{i_j}\geq \frac{1}{k_n(\v{a},x)}	\right\rbrace\notag\\
		&\quad=\sum_{i_1,\dots,i_n}\left\lbrace \prod_{j=1}^{n}a_{i_j} \middle|\sum_{j=1}^{n}\log a_{i_j}\geq \log \frac1{k_n(\v{a},x)}	\right\rbrace\notag\\
		&\quad=\Pr\left[\sum_{j=1}^n L_j\leq n\langle L\rangle+ x n t_n\right].
	\end{align}
	For $x<0$, we can now apply \cref{lem:magnitude_tail} to the variables $X_j:= \left(L_j-\langle L\rangle\right)/x$ to obtain Eq.~\eqref{eq:magnitude_1}. An analogous argument can be employed for $x>0$, with all of the above inequalities reversed, yielding Eq.~\eqref{eq:magnitude_2}. Finally, for \mbox{$x=0$} case, we can appeal to the Central Limit Theorem, which gives
	\begin{align}
	&\sum_{i}\left\lbrace 
	\left(\bm a^{\otimes n}\right)_i \Bigm|
	\left(\bm a^{\otimes n}\right)_i\geq \frac1{k_n(\v{a},0)}		
	\right\rbrace\notag\\
	&\qquad\qquad=\Pr\left[\frac{1}{n}\sum_{j=1}^{n}L_j\leq \langle L\rangle\right]\xrightarrow[]{n\to\infty} \frac{1}{2},
	\end{align}
	implying Eqs.~\eqref{eq:magnitude_1}-\eqref{eq:magnitude_2}.
\end{proof}

Using the above result we can now prove the majorisation-based version of the moderate deviation bound.
\begin{lem}[Majorisation-based tail bound]
	\label{lem:majorisation_tail}
	Consider an arbitrary probability distribution $\bm a$ satisfying $V(\bm a)>0$. For $x\leq 0$ we have
	\begin{subequations}
		\begin{align}
			\label{eq:majorisation_1}
			\lim\limits_{n\to\infty} \frac{1}{nt_n^2} \ln \left[\sum_{i\leq k_n(\v{a},x)} 
			(\bm a^{\otimes n})^\downarrow_i\right]	=-\frac{x^2}{2V(\v{a})},
		\end{align}
		and similarly for $x\geq 0$ we have
		\begin{align}
			\label{eq:majorisation_2}
			\lim\limits_{n\to\infty} \frac{1}{nt_n^2} \ln \left[\sum_{i\geq k_n(\v{a},x)}(\bm a^{\otimes n})_i^\downarrow\right]=-\frac{x^2}{2 V(\v{a})}.
		\end{align}
	\end{subequations}
\end{lem}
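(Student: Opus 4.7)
The plan is to reduce the rank-based (majorisation) tail bound to the magnitude-based tail bound established in \cref{lem:magnitude_tail}, by exploiting the elementary Markov-type inequality $|\{i:(\v{a}^{\otimes n})_i \geq T\}| \leq 1/T$. Setting $T = 1/k_n(\v{a},x)$ immediately links the value threshold $1/k_n$ appearing in \cref{lem:magnitude_tail} to the rank threshold $k_n$ appearing here.

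For $x \leq 0$ (sum of the top $k_n$ entries) I would sandwich the target quantity between two copies of \cref{lem:magnitude_tail}. The lower bound is immediate: every index with $(\v{a}^{\otimes n})_i \geq 1/k_n(\v{a},x)$ is automatically among the top $k_n(\v{a},x)$ ranked entries (there are at most $k_n(\v{a},x)$ of them by Markov), so the rank-sum dominates the magnitude-sum of \cref{lem:magnitude_tail}, which has the desired asymptotic rate $-x^2/(2V(\v{a}))$. For the matching upper bound I would split the top-$k_n$ sum at the slightly shifted threshold $1/k_n(\v{a}, x+\delta)$ with small $\delta>0$ chosen so that $x+\delta\leq 0$: entries above the shifted threshold contribute at most the shifted magnitude-sum, while those below contribute at most $k_n(\v{a},x)/k_n(\v{a},x+\delta) = e^{-\delta n t_n}$. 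Since $t_n \to 0$, the slack rate $\delta/t_n$ diverges, so the second piece is super-exponentially negligible compared to the first on the $nt_n^2$ scale, yielding $\limsup (nt_n^2)^{-1}\ln A_n \leq -(x+\delta)^2/(2V(\v{a}))$; sending $\delta \downarrow 0$ closes the gap.

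The $x \geq 0$ case is dual. The upper bound is immediate from Markov, which gives $(\v{a}^{\otimes n})^\downarrow_{k_n} \leq 1/k_n$, so every entry of rank $\geq k_n$ lies below $1/k_n$ and the rank-sum is dominated by the corresponding magnitude-sum of \cref{lem:magnitude_tail}. For the lower bound I would pass to the complement, $A_n = 1 - \sum_{i < k_n}(\v{a}^{\otimes n})^\downarrow_i$, and upper-bound the top-$(k_n{-}1)$ sum by splitting at $1/k_n(\v{a},x+\delta)$ with $\delta > 0$: entries above contribute at most $1 - B_n^+$ where $B_n^+$ is the shifted magnitude-sum, and entries below contribute at most $(k_n-1)/k_n(\v{a},x+\delta) \leq e^{-\delta n t_n}$. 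Rearranging yields $A_n \geq B_n^+ - e^{-\delta n t_n}$; the error is again negligible on the $nt_n^2$ scale, and $\delta \downarrow 0$ finishes the bound.

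The principal technical obstacle I anticipate is bookkeeping the competition between the $n t_n$ scale of the combinatorial slack terms and the $n t_n^2$ scale of the magnitude-sum decays; the resolution is that for any fixed $\delta > 0$, $\delta n t_n$ dominates $n t_n^2$ as $n \to \infty$ precisely because $t_n \to 0$, so the slack vanishes on the relevant logarithmic scale. The boundary case $x = 0$ needs a brief separate treatment since no admissible shift is available, but here the central limit argument used in the proof of \cref{lem:magnitude_tail} gives $A_n \to 1/2$, yielding the claimed limit $0$ directly.
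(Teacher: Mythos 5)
Your proof is correct, and it takes a recognisably similar route to the paper's (reduction to \cref{lem:magnitude_tail} via Markov, with a $\delta$-shift to close the gap), but the upper-bound half is handled differently. The paper establishes the set inclusion $S_n(x)\subseteq\tilde S_n(x+\delta)$ by a cardinality estimate: it lower-bounds the mass in $\tilde S_n(x+\delta)\setminus\tilde S_n(x+\delta/2)$ via \cref{lem:magnitude_tail}, converts that into $\abs{\tilde S_n(x+\delta)}\geq k_n(\v a,x)$, and only then compares sums. You instead bypass the inclusion entirely by splitting the top-$k_n(\v a,x)$ sum at the value threshold $1/k_n(\v a,x+\delta)$ and bounding the sub-threshold contribution by the crude count $k_n(\v a,x)\cdot k_n(\v a,x+\delta)^{-1}=e^{-\delta n t_n}$. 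This is the same combinatorial fact the paper uses, but applied directly to the sum rather than to the index set; it is a bit shorter and avoids the intermediate point $x+\delta/2$. Both approaches hinge on the same scale separation: the slack $e^{-\delta n t_n}$ satisfies $\frac{1}{nt_n^2}\ln e^{-\delta nt_n}=-\delta/t_n\to-\infty$, so it is invisible on the $nt_n^2$ scale. Your dual argument for $x\geq 0$ (Markov for the upper bound, pass to the complement and shift for the lower bound) is also sound and is essentially the same dualisation the paper leaves implicit with ``an analogous proof.''

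One small imprecision in your treatment of $x=0$: you assert the CLT gives $A_n\to 1/2$, but for the $\sum_{i\geq k_n}$ direction the CLT combined with Markov only yields $\limsup_n A_n\leq 1/2$; a matching lower bound on $A_n$ does not come from the CLT alone. What you actually need is only $\liminf_n\frac{1}{nt_n^2}\ln A_n\geq 0$, which your shift-by-$\delta$ lower-bound argument already delivers at $x=0$ (there $x+\delta=\delta>0$ is admissible for \cref{eq:magnitude_2}). Similarly, for $\sum_{i\leq k_n}$ the upper bound at $x=0$ is just $A_n\leq 1$, no CLT needed; the CLT enters only to show the Markov-inclusion lower bound is $\Theta(1)$. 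So the $x=0$ case does not in fact require the separate ``$A_n\to 1/2$'' claim — it follows from the pieces you already have — but the claim as stated overshoots what is established.
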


\begin{proof}
	Here we follow the proof of the small-deviation analogue of this result, Lemmas~15~and~16 of Ref.~\cite{kumagai2017second}. Consider first the $x\leq 0$ case, and define two sets of indices
	\begin{subequations}
		\begin{align}
			S_n(x)&:=\left\lbrace 1,\dots,\lfloor k_n(\bm a,x)\rfloor \right\rbrace,\\
			\tilde S_n(x)&:=\left\lbrace i \,\middle|\, (\bm a^{\otimes n})_i^\downarrow \geq 1/k_n(\bm a,x) \right\rbrace.
		\end{align}
	\end{subequations}
	We note that \cref{lem:magnitude_tail} gives that 
	\begin{align}
		\lim\limits_{n\to\infty}\frac{1}{nt_n^2}\ln\left[\sum_{i\in \tilde S(x)}\left(\bm a^{\otimes n}\right)_i^\downarrow\right]=-\frac{x^2}{2V(\bm a)}
	\end{align}
	for any $x\leq 0$, and we wish to show an analogous result for $ S_n(x)$. We will achieve this by showing, for any $\delta>0$, that $\tilde S_n(x)\subseteq S_n(x) \subseteq \tilde S_n(x+\delta)$ holds eventually, i.e., for large enough $n$. The first inclusion follows trivially from the normalisation of our distribution, and so it is left only to show that \mbox{$\tilde S_n(x) \subseteq S_n(x+\delta)$}. 
	
	Noting that $(\bm a^{\otimes n})_i^\downarrow-1/k_n(\bm a,x+\delta/2)\geq 0$ if and only if $i\in \tilde S_n(x+\delta/2)$,  we see that 
	\begin{align}
		&\sum_{i\in \tilde S_n(x+\delta/2)}\left[ (\bm a^{\otimes n})_i^\downarrow-\frac1{k_n(\bm a,x+\delta/2)} \right] \notag
		\\
		&\qquad\qquad\qquad
		\geq \sum_{i\in T}\left[ (\bm a^{\otimes n})_i^\downarrow-\frac1{k_n(\bm a,x+\delta/2)} \right],
	\end{align}
	for any set of indices $T$. Taking $T= \tilde S_n(x+\delta)$, this gives
	\begin{align}
		\label{eqn:cardinality}
		\frac
		{\abs{\tilde S_n(x+\delta)\setminus \tilde S_n(x+\delta/2)}}{k_n(\bm a,x+\delta/2)}\geq\!\!\!\!\!\! \sum_{i\in \tilde S_n(x+\delta)\setminus \tilde S_n(x+\delta/2)}\!\!\!\!\!\!\!\!\!\!\!\!\left(\bm a^{\otimes n}\right)_i^\downarrow.
	\end{align}	
	Lemma~\ref{lem:magnitude_tail} tells us that the summation on the RHS scales as $e^{-\Theta(nt_n^2)}$, specifically that there is a lower bound of the form $e^{-Cnt_n^2}$ for some constant $C$. Since $t_n\to 0$, we eventually have that $Ct_n<\delta/2$, and so this sum can be lower bounded as follows
	\begin{align}
		\sum_{i\in \tilde S_n(x+\delta)\setminus \tilde S_n(x+\delta/2)}\left(\bm a^{\otimes n}\right)_i^\downarrow \ev > e^{-\delta nt_n/2}.
	\end{align}
	Applying this bound to Eq.~\eqref{eqn:cardinality} allows us to conclude
	\begin{align}
	\abs{\tilde S_n(x+\delta)}
	&\geq \abs{\tilde S_n(x+\delta) \setminus  \tilde S_n(x+\delta/2)}\notag\\
	&\ev> e^{-\delta nt_n}k_n(\bm a,x+\delta/2),\notag\\
	&= k_n(\bm a,x),
	\end{align}
	and therefore that $S_n(x)\subseteq \tilde S_n(x+\delta)$ as required.
	
	The inclusions $\tilde S_n(x)\subseteq S_n(x) \subseteq \tilde S_n(x+\delta)$, together with \cref{lem:magnitude_tail}, give us the following inequalities
	\begin{subequations}
		\begin{align}
			\liminf\limits_{n\to\infty}\frac{1}{nt_n^2}\ln\left[\sum_{i\in S_n(x)}\left(\bm a^{\otimes n}\right)_i^\downarrow\right]
			&\geq -\frac{(x+\delta)^2}{2V(\bm a)},\\
			\limsup\limits_{n\to\infty}\frac{1}{nt_n^2}\ln\left[\sum_{i\in S_n(x)}\left(\bm a^{\otimes n}\right)_i^\downarrow\right]
			&\leq-\frac{x^2}{2V(\bm a)}.
		\end{align}
	\end{subequations}
 	As this holds for any $\delta>0$, we conclude that
	\begin{align}
		\lim\limits_{n\to\infty}\frac{1}{nt_n^2}\ln\left[\sum_{i\in S_n(x)}\left(\bm a^{\otimes n}\right)_i^\downarrow\right]=-\frac{x^2}{2V(\bm a)},
	\end{align} 
	which is equivalent to \cref{eq:majorisation_1}. An analogous proof can be performed for $x\geq 0$, resulting in Eq.~\eqref{eq:majorisation_2}. 
\end{proof}

\begin{rmk}
	One can extend Lemma~\ref{lem:majorisation_tail} to probability distributions $\v{a}$ with $V(\v{a})=0$ by a direct calculation, since $V(\v{a})=0$ means all non-zero entries of $\v{a}$ are equal. One then obtains that Eq.~\eqref{eq:majorisation_1} holds for $x<0$ and Eq.~\eqref{eq:majorisation_2} for $x>0$, i.e., both expressions diverge to $-\infty$.
\end{rmk}

\subsection{Tail bounds for total distributions}

Recall that in Sec.~\ref{subsec:multicopy} we defined total initial and target states for a given rate $r_n$ as
\begin{equation}
	\v{P}^n:=\v{p}^{\otimes n}\otimes\v{f}^{\otimes nr_n},\quad \v{Q}^n:=\v{q}^{\otimes nr_n}\otimes\v{f}^{\otimes n},
\end{equation}
where $\v{f}$ stands for the free state of a given resource theory, i.e., $\v{f}$ is a sharp state $\v{s}$ for entanglement and coherence transformations, and $\bm f$ is the maximally mixed state~$\bm \eta$ in the case of thermodynamic transformations (corresponding to the embedded thermal state $\v{\gamma}$). For notational clarity we will henceforth omit the $\downarrow$ superscripts on these total states, assuming them to be ordered (i.e. we denote $\bm P^{n\downarrow}$ and $\bm Q^{n\downarrow}$ simply by $\bm P^{n}$ and $\bm Q^{n}$).

Analogous to the quantity which appears in our moderate deviation bounds, consider the quantity
\begin{align}
	K_n(x)&:= \exp\left(H(\bm Q^n)+x n t_n \right).
\end{align}
Using Lemma~\ref{lem:majorisation_tail} we can prove the following tail bounds for the total distributions.

\begin{lem}[Tail bound for $\bm P^n$ and $\bm Q^n$]
	\label{lem:PQ_tail}
	For any $\mu\in\mathbb R$, consider the conversion rate
	\begin{align}
		\label{eq:rate_tail_bound}
		r_n(\mu)=\frac{H(\bm f)-H(\bm p)+\mu t_n}{H(\bm f)-H(\bm q)},
	\end{align}
	and the irreversibility parameter
	\begin{align}
		\nu := \frac{V(\bm p)}{V(\bm q)}\cdot \frac{H(\bm f)-H(\bm q)}{H(\bm f)-H(\bm p)}.
	\end{align}
	The total output state $\bm Q^n$ has the tail bounds
	\begin{subequations}
		\begin{align}
			x\leq 0:\quad \lim\limits_{n\to \infty}\frac{1}{nt_n^2}\ln\left[\sum_{i\leq K_n(x)}Q_i^{n} \right]&=-\frac{\nu x^2}{2V(\v{p})},\\
			x\geq 0:\quad \lim\limits_{n\to \infty}\frac{1}{nt_n^2}\ln\left[\sum_{i\geq K_n(x)}Q_i^{n} \right]&=-\frac{\nu x^2}{2V(\v{p})}.
		\end{align}
	\end{subequations}
	Similarly, the total input state $\bm P^n$ has the tail bounds
	\begin{subequations}
		\begin{align}
			x\leq\mu:~\lim\limits_{n\to \infty}\frac{1}{nt_n^2}\ln\left[\sum_{i\leq K_n(x)}P_i^{n} \right]&=-\frac{(x-\mu)^2}{2V(\v{p})},\\
			x\geq \mu:~\lim\limits_{n\to \infty}\frac{1}{nt_n^2}\ln\left[\sum_{i\geq K_n(x)}P_i^{n} \right]&=-\frac{(x-\mu)^2}{2V(\v{p})}.
		\end{align}
	\end{subequations}
\end{lem}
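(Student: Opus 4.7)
The plan is to reduce the tail bounds for $\bm P^n$ and $\bm Q^n$ to the majorisation tail bound of Lemma~\ref{lem:majorisation_tail}, exploiting the factorised structure of the total states together with the fact that in both resource theories under consideration the free state satisfies $V(\bm f)=0$: this is trivial for a sharp state $\bm s$, and for the maximally mixed $\bm\eta$ it follows because $-\ln\eta_i$ is constant. As a consequence, viewing $-\ln P^n$ and $-\ln Q^n$ as sums of i.i.d.\ log-probability random variables, the $\bm f$-factors contribute only a deterministic shift, and
\begin{align*}
H(\bm P^n)&=nH(\bm p)+nr_n H(\bm f), & V(\bm P^n)&=nV(\bm p),\\
H(\bm Q^n)&=nr_n H(\bm q)+nH(\bm f), & V(\bm Q^n)&=nr_n V(\bm q).
\end{align*}

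The next step is to rewrite $K_n(x)$ in a form compatible with Lemma~\ref{lem:majorisation_tail} applied to each distribution separately. For $\bm Q^n$ the identity $\ln K_n(x)=H(\bm Q^n)+xnt_n$ is already of the required ``entropy plus moderate shift'' type. For $\bm P^n$, substituting the defining formula of $r_n(\mu)$ gives, after a short calculation,
\begin{equation*}
\ln K_n(x)-H(\bm P^n)=n(x-\mu)t_n,
\end{equation*}
so the effective shift parameter for $\bm P^n$ is $x-\mu$. This explains the role of the rate formula~\eqref{eq:rate_tail_bound}: it is chosen precisely so that $\ln K_n$ sits at $H(\bm P^n)+\mu nt_n$ when $x=0$, synchronising the relevant entropies.

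With these reformulations in hand, I would repeat the two-step strategy of Lemmas~\ref{lem:magnitude_tail}~and~\ref{lem:majorisation_tail}. First, establish the magnitude-based bound by applying Lemma~\ref{lem:moderate} to the i.i.d.\ centered random variables $L_j-H(\bm p)$ (respectively $L_j-H(\bm q)$), and then promote it to the sorted tail via the inclusion argument $\tilde S_n(x)\subseteq S_n(x)\subseteq \tilde S_n(x+\delta)$ used in Lemma~\ref{lem:majorisation_tail}. For $\bm P^n$ the sample size matches $n$ and the variance is $V(\bm p)$, giving rate $-(x-\mu)^2/(2V(\bm p))$ directly. For $\bm Q^n$ the sample size is $nr_n$ with variance $V(\bm q)$ and relative shift $xt_n/r_n$, giving rate $-x^2/(2r_nV(\bm q))$; a short algebraic manipulation using $r_n\to r_\infty=(H(\bm f)-H(\bm p))/(H(\bm f)-H(\bm q))$ and the definition of $\nu$ shows that this limit equals $-\nu x^2/(2V(\bm p))$.

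The main obstacle I anticipate is technical rather than conceptual: verifying that the moderate deviation regime of Lemma~\ref{lem:moderate} is preserved when the sample size changes from $n$ to $M=nr_n$. Concretely, one has to check that $s_M:=|x|t_n/r_n$ is itself a moderate sequence in $M$, i.e.\ $s_M\to 0$ and $\sqrt{M}s_M=|x|\sqrt{n/r_n}\,t_n\to\infty$, both of which reduce to the moderate hypothesis on $t_n$ since $r_n$ is bounded away from $0$ and $\infty$. Some additional bookkeeping is needed to handle the fact that $nr_n$ is typically not an integer and to absorb the resulting integer-part corrections into the $o(t_n)$ error, and to separately dispose of the degenerate case $V(\bm p)=0$ or $V(\bm q)=0$ via the remark following Lemma~\ref{lem:majorisation_tail}.
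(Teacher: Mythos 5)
Your approach is essentially the same as the paper's: both reduce the tail sums of $\bm P^n$ and $\bm Q^n$ to tail sums of plain product distributions $\bm p^{\otimes n}$ and $\bm q^{\otimes nr_n}$ by exploiting that the appended free factors carry no fluctuations, and both then invoke the moderate-deviation machinery with the shift for $\bm P^n$ being $x-\mu$ (via the choice of $r_n(\mu)$) and with the $\bm Q^n$ rate picking up a factor of $r_\infty$ that combines into $\nu$. One small inefficiency: once you have observed that $V(\bm f)=0$ collapses the $\bm f$-factors into a deterministic shift (so that, e.g., $\sum_{i\le K_n(x)}Q^n_i=\sum_{i\le K_n(x)/d^n}(\bm q^{\otimes nr_n})^\downarrow_i$ when $\bm f=\bm\eta$, and with no rescaling when $\bm f=\bm s$), you can apply \cref{lem:majorisation_tail} as a black box rather than re-running the magnitude and inclusion arguments from scratch — that is what the paper does, and it saves the duplication. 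Your explicit check that the effective shift $|x|t_n/r_n$ remains moderate when the sample size is re-indexed from $n$ to $nr_n$, and the $\lfloor nr_n\rfloor$ bookkeeping, are exactly the subtleties that the paper passes over silently, and it is good that you flag them; they do indeed reduce to the boundedness of $r_n$ away from $0$ and $\infty$ together with the moderate hypothesis on $t_n$.
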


\begin{proof}
	For $\v{f}=\v{s}$ we have
	\begin{align}
		\sum_{i\leq K_n(x)}Q_i^{n}
		&=\sum_{i\leq k_{nr_n}(\bm q,x/r_n)}\left(\bm q^{\otimes nr_n}\otimes \bm s^{\otimes n}\right)_i^\downarrow\notag\\
		&=\sum_{i\leq k_{nr_n}(\bm q,x/r_n)}\left(\bm q^{\otimes nr_n}\right)_i^\downarrow.
	\end{align}

	Similarly for $\v{f}=\v{\eta}$ we have
	\begin{align}
		\sum_{i\leq K_n(x)}Q_i^{n}
		&=\sum_{i\leq d^nk_{nr_n}(\bm q,x/r_n)}\left(\bm q^{\otimes nr_n}\otimes \bm \eta^{\otimes n}\right)_i^\downarrow\notag\\
		&=\sum_{i\leq k_{nr_n}(\bm q,x/r_n)}\left(\bm q^{\otimes nr_n}\right)_i^\downarrow.
	\end{align}
	Applying \cref{lem:majorisation_tail} to both of the above equations yields the desired bounds.
	
	Next, define $K_n^{P}(y):= \exp\left(H(\bm P^n)+y n t_n \right)$. By analogy to $\bm Q^n$, we have the following tail bounds on $\bm P^n$
	\begin{subequations}
		\begin{align}
			\label{eq:KP_tail_1}
			y\leq 0:~\lim\limits_{n\to \infty}\frac{1}{nt_n^2}\ln\left[\sum_{i\leq K^P_n(y)}P_i^{n} \right]&=-\frac{y^2}{2V(\v{p})},\\
			\label{eq:KP_tail_2}
			y\geq 0:~\lim\limits_{n\to \infty}\frac{1}{nt_n^2}\ln\left[\sum_{i\geq K^P_n(y)}P_i^{n} \right]&=-\frac{y^2}{2V(\v{p})}.
		\end{align}
	\end{subequations}
	Using the rate $r_n(\mu)$, and expanding out both $K_n$ and $K_n^P$, we find that $K_n^P(x-\mu)=K_n(x)$. Substituting this into the above expressions, we get the desired tail bounds purely in terms of $K_n(x)$.
\end{proof}

We now want to consider the regions in which our two total distributions majorise each other. To do this, we first define the values of $x$ for which the tail bounds for $\bm P^{n}$ and $\bm Q^{n}$ coincide. Let us introduce
\begin{align}
	z_{\mathrm{C}}&:=\frac{\mu}{1-\sqrt \nu},\qquad\text{and}\qquad
	z_{\mathrm{T}}:=\frac{\mu}{1+\sqrt \nu}.
\end{align}
These correspond to the values of $x$ for which the moderate deviation tail bounds of the total distributions meet on the same side (cis) or on opposite sides (trans), respectively. More precisely, as a consequence of Lemma~\ref{lem:PQ_tail}, $z_{\mathrm{C}}$ and $z_{\mathrm{T}}$ are the solutions to the following equations
\begin{subequations}
	\begin{align}
		\!\!\!\lim\limits_{n\to \infty}\frac{1}{nt_n^2}\ln\!\left[\sum_{i\leq K_n(z_{\mathrm {C}})}\!\!\!\!\!\! P_i^n \right]&\!\!=
		\!\!\lim\limits_{n\to \infty}\frac{1}{nt_n^2}\ln\!\left[\sum_{i\leq K_n(z_{\mathrm {C}})}\!\!\!\!\!\! Q_i^n \right]\!\!,\\
		\!\!\!\lim\limits_{n\to \infty}\frac{1}{nt_n^2}\ln\!\left[\sum_{i\geq K_n(z_{\mathrm {C}})}\!\!\!\!\!\! P_i^n \right]&\!\!=
		\!\!\lim\limits_{n\to \infty}\frac{1}{nt_n^2}\ln\!\left[\sum_{i\geq K_n(z_{\mathrm {C}})}\!\!\!\!\!\! Q_i^n \right]\!\!,\\
		\!\!\!\lim\limits_{n\to \infty}\frac{1}{nt_n^2}\ln\!\left[\sum_{i\leq K_n(z_{\mathrm {T}})}\!\!\!\!\!\! P_i^n \right]&\!\!=
		\!\!\lim\limits_{n\to \infty}\frac{1}{nt_n^2}\ln\!\left[\sum_{i\geq K_n(z_{\mathrm {T}})}\!\!\!\!\!\! Q_i^n \right]\!\!,\\
		\!\!\!\lim\limits_{n\to \infty}\frac{1}{nt_n^2}\ln\!\left[\sum_{i\geq K_n(z_{\mathrm {T}})}\!\!\!\!\!\! P_i^n \right]&\!\!=
		\!\!\lim\limits_{n\to \infty}\frac{1}{nt_n^2}\ln\!\left[\sum_{i\leq K_n(z_{\mathrm {T}})}\!\!\!\!\!\! Q_i^n \right]\!\!.
	\end{align}
\end{subequations}
We schematically present the positions of $z_{\mathrm{C}}$ and $z_{\mathrm{T}}$ in Fig.~\ref{fig:cdfs}, which also serves to illustrate the proof of the following lemma.

\begin{lem}[Dominance of total states]
	\label{lem:dominance}
	For a bounded interval $[a,b]$, such that $a>z_{\mathrm C}$ (for~$\nu<1$) or $b<z_{\mathrm C}$ (for~$\nu>1$), the inequalities
	\begin{align}
		\label{ineq:dominance}
		\sum_{i \leq K_n(x)}P_i^n > \sum_{i \leq K_n(x)}Q_i^n 
	\end{align}
	hold for all $x\in[a,b]$, for sufficiently large $n$. Similarly, for any bounded interval $[a,b]$ with $b<z_{\mathrm T}$, the inequalities
	\begin{align}
		\label{ineq:dominance2}
		\sum_{i \geq K_n(x)}P_i^n > \sum_{i \leq K_n(x)}Q_i^n 
	\end{align}
	hold for all $x\in[a,b]$, for sufficiently large $n$.
\end{lem}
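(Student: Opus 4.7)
The strategy is to translate each strict inequality into a comparison between asymptotic decay rates supplied by \cref{lem:PQ_tail}, and then promote the resulting pointwise statements to a single eventual inequality valid on all of $[a,b]$ via compactness. The starting algebraic observation is that $z_{\mathrm{C}}$ and $z_{\mathrm{T}}$ are precisely the two roots of $\nu x^2 = (x-\mu)^2$: the ``same-side'' branch $\sqrt{\nu}\,x = x-\mu$ gives $z_{\mathrm{C}} = \mu/(1-\sqrt{\nu})$, while the ``opposite-side'' branch $\sqrt{\nu}\,x = -(x-\mu)$ gives $z_{\mathrm{T}} = \mu/(1+\sqrt{\nu})$. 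Consequently, the rate gap between the $P^n$-tail and the $Q^n$-tail furnished by \cref{lem:PQ_tail} changes sign exactly at these points, with the sign pattern sensitive to $\nu\lessgtr 1$ and to the sign of $\mu$, which accounts for the branching in the hypotheses.

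For the first inequality I would rewrite $\sum_{i\leq K_n(x)}P_i^n > \sum_{i\leq K_n(x)}Q_i^n$ equivalently as $\sum_{i> K_n(x)}Q_i^n > \sum_{i> K_n(x)}P_i^n$, so that one can always choose to work with whichever of the two dual tails is in the exponentially decaying regime of \cref{lem:PQ_tail}. A short case split on the position of $x$ relative to $0$ and $\mu$ then shows that, under $a>z_{\mathrm{C}}$ (for $\nu<1$) or $b<z_{\mathrm{C}}$ (for $\nu>1$), at every $x\in[a,b]$ one of two things occurs: either both relevant tails decay exponentially with the $Q$-rate strictly larger than the $P$-rate (so $Q$'s tail is exponentially dominated by $P$'s), or one tail saturates to $1$ while the other vanishes (so the inequality holds trivially for large $n$).

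The main technical step is then to upgrade these rate comparisons to uniformity in $x\in[a,b]$. The sums $\sum_{i\leq K_n(x)} P_i^n$ and $\sum_{i \leq K_n(x)} Q_i^n$ are monotone non-decreasing in $x$, and the limiting rate functions $(x-\mu)^2/(2V(\v p))$ and $\nu x^2/(2V(\v p))$ are continuous, so compactness of $[a,b]$ yields a strictly positive lower bound $\delta>0$ on the relevant rate gap uniformly on $[a,b]$. Applying \cref{lem:PQ_tail} at a finite grid of points in $[a,b]$ with this uniform gap, and then interpolating to arbitrary $x\in[a,b]$ via monotonicity of the tail sums, produces a single $N$ such that the inequality holds simultaneously for all $n\geq N$ and all $x\in[a,b]$.

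The second inequality $\sum_{i\geq K_n(x)}P_i^n > \sum_{i\leq K_n(x)}Q_i^n$ under $b<z_{\mathrm{T}}$ is handled by the same recipe, now using the ``opposite-side'' root: the decaying tails of $P^n$ and $Q^n$ sit on opposite sides of $K_n(x)$, and the rate gap becomes strictly positive precisely when $x<z_{\mathrm{T}}$. The hardest part of the argument is not any single analytic step but the bookkeeping across the several sub-cases (signs of $\mu$, the value of $\nu$ relative to $1$, saturation versus decay on each side), together with the compactness-and-monotonicity argument that converts a family of pointwise eventual inequalities into a single eventual inequality uniform on $[a,b]$.
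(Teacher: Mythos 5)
Your proposal reaches the correct conclusion, but it departs from the paper's route in a way worth contrasting. The paper's proof works with the transformed quantities
\begin{align}
f_n(x)&=\frac{2V(\bm p)}{nt_n^2}L\!\left(\sum_{i\leq K_n(x)}P^{n}_i\right),\quad
g_n(x)=\frac{2V(\bm p)}{nt_n^2}L\!\left(\sum_{i\leq K_n(x)}Q^{n}_i\right),
\end{align}
where $L(y)=\log\frac{y}{1-y}$. This single monotone bijection of $(0,1)$ onto $\mathbb{R}$ simultaneously captures the left tail (when the sum is near $0$) and the right tail (when the sum is near $1$), so that \emph{all} the sub-cases you enumerate — ``both decay'', ``one saturates while the other vanishes'', and the dual-tail regime — collapse into one statement: $f_n$ and $g_n$ are non-decreasing and converge pointwise to the continuous signed quadratics $f(x)=\operatorname{sgn}(x-\mu)(x-\mu)^2$ and $g(x)=\operatorname{sgn}(x)\,\nu x^2$, and the hypothesis on $[a,b]$ is precisely $f>g$ there. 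The uniformity over $[a,b]$ is then delegated to a clean compact-convergence lemma for non-decreasing sequences with continuous pointwise limits (Appendix~B of the paper), rather than to a hand-built grid argument. For the second inequality, the paper simply replaces $g_n\mapsto -g_n$, $g\mapsto -g$; no new case analysis is needed.

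Your approach — rewrite the inequality in terms of whichever dual tail is decaying, invoke \cref{lem:PQ_tail} pointwise, and promote to uniformity by a finite grid plus monotonicity of the sums — is sound, and the observation that $z_{\mathrm{C}}$, $z_{\mathrm{T}}$ are the two roots of $\nu x^2=(x-\mu)^2$ is exactly right. What it buys is a more hands-on argument that avoids any appeal to a general compactness lemma; what it costs is the bookkeeping you correctly flag as the hard part. Two places where that bookkeeping needs more care than your sketch acknowledges: (i) the direction of the rate comparison flips depending on which dual tail is decaying — when both right-tails decay you need the $P$-rate to exceed the $Q$-rate, not the other way around, so the blanket statement ``the $Q$-rate strictly larger than the $P$-rate'' is only correct in the left-tail regime; (ii) the grid argument compares $\sum_{i\leq K_n(x_j)}P^n_i$ against $\sum_{i\leq K_n(x_{j+1})}Q^n_i$, and when $x_j,x_{j+1}$ straddle $0$ or $\mu$ (which \emph{can} happen within the allowed interval, e.g.\ for $\mu<0<b$ with $\nu<1$), the two sums can both tend to $1$, forcing a comparison of the rates of the complementary tails rather than a trivial $1>0$. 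Your sketch's dichotomy ``both decay'' vs.\ ``one saturates, the other vanishes'' omits this third regime, even though it is implicitly handled once you pass to dual tails. None of these is a fatal gap — all the cases come out favourably once worked through and the hypothesis $z_{\mathrm{C}}\notin[a,b]$ excludes the genuinely bad region — but they illustrate why the paper's $L$-based unification is worth the upfront investment.
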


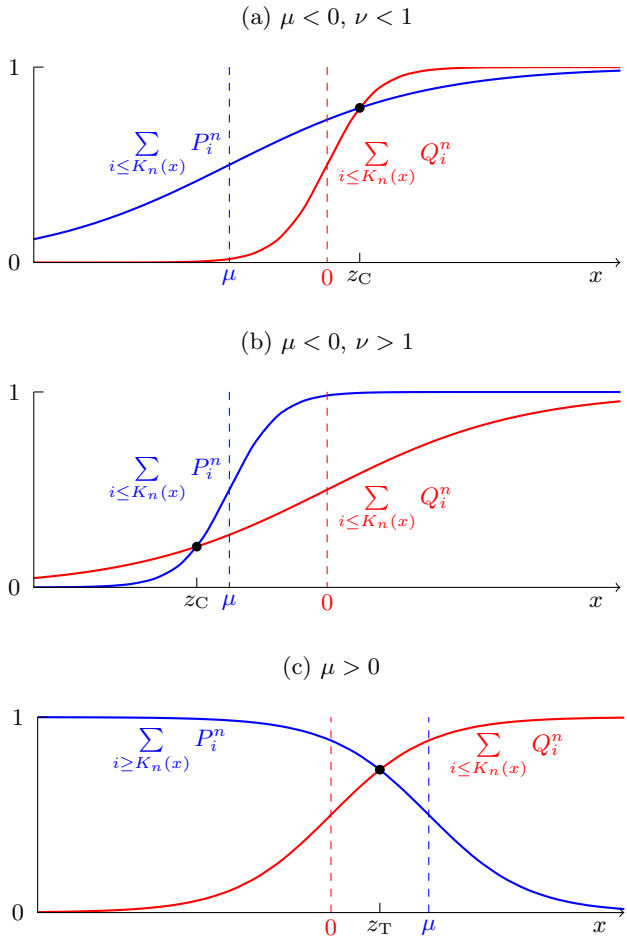
\begin{figure}
	\centering
	\begin{tikzpicture}[scale=1.3]
	\begin{scope}
	\clip (-3,-1) rectangle (3,1);
	
	\draw[domain=-3:3,smooth,variable=\x,red,thick] plot ({\x},{tanh(2*(\x))});
	\draw[domain=-3:3,smooth,variable=\x,blue,thick] plot ({\x},{tanh(0.5*(\x+1))});
	
	\draw[dashed,red] (0,-2) -- (0,2);
	\draw[dashed,blue] (-1,-2) -- (-1,2);
	
	\end{scope}
	\draw[->] (-3,1) -- (-3,-1) -- (3,-1);
	\draw (-3,1) -- (-2.9,1);
	\node at (-3.2,-1) {0};
	\node at (-3.2,+1) {1};
	
	\fill (1/3,0.58278294534791012006763998724863620140070458328815291696) circle (0.05cm);
	\node at (0,-1.15) {\color{red}0};
	\node at (-1,-1.15) {\color{blue}$\mu$};
	\draw (1/3,-1) -- (1/3,-.9);
	\node at (1/3,-1.15) {$z_{\mathrm C}$};
	
	\node at (2.75,-1.15) {$x$};
	
	\node at (.7,0) {\color{red}$\sum\limits_{i\leq K_n(x)}Q_i^n$};
	\node at (-1.65,.1) {\color{blue}$\sum\limits_{i\leq K_n(x)}P_i^n$};
	\node at (0,1.5) {(a) $\mu<0$, $\nu<1$};
	\end{tikzpicture}
	~\\
	~\\
	\begin{tikzpicture}[scale=1.3]
	\begin{scope}
	\clip (-3,-1) rectangle (3,1);
	
	\draw[domain=-3:3,smooth,variable=\x,red,thick] plot ({\x},{tanh(0.5*(\x))});
	\draw[domain=-3:3,smooth,variable=\x,blue,thick] plot ({\x},{tanh(2*(\x+1))});
	
	\draw[dashed,red] (0,-2) -- (0,2);
	\draw[dashed,blue] (-1,-2) -- (-1,2);
	
	\end{scope}
	\draw[->] (-3,1) -- (-3,-1) -- (3,-1);
	\draw (-3,1) -- (-2.9,1);
	\node at (-3.2,-1) {0};
	\node at (-3.2,+1) {1};
	
	\fill (-1-1/3,-0.58278294534791012006763998724863620140070458328815291696) circle (0.05cm);
	\node at (0,-1.15) {\color{red}0};
	\node at (-1,-1.15) {\color{blue}$\mu$};
	\draw (-1-1/3,-1) -- (-1-1/3,-.9);
	\node at (-1-1/3,-1.15) {$z_{\mathrm C}$};
	
	\node at (2.75,-1.15) {$x$};
	
	\node at (.7,-.2) {\color{red}$\sum\limits_{i\leq K_n(x)}Q_i^n$};
	\node at (-1.65,0.1) {\color{blue}$\sum\limits_{i\leq K_n(x)}P_i^n$};
	\node at (0,1.5) { (b) $\mu<0$, $\nu>1$};
	\end{tikzpicture}
	~\\
	~\\
	\begin{tikzpicture}[scale=1.3]
	\begin{scope}
	\clip (-3,-1) rectangle (3,1.2);
	
	\draw[domain=-3:3,smooth,variable=\x,red,thick] plot ({\x},{tanh(1*(\x))});
	\draw[domain=-3:3,smooth,variable=\x,blue,thick] plot ({\x},{tanh(1*(-\x+1))});
	
	\draw[dashed,red] (0,-1) -- (0,1);
	\draw[dashed,blue] (1,-1) -- (1,1);
	
	\end{scope}
	\draw[->] (-3,1) -- (-3,-1) -- (3,-1);
	\draw (-3,1) -- (-2.9,1);
	\node at (-3.2,-1) {0};
	\node at (-3.2,+1) {1};
	
	\fill (0.5,0.462117157260009758502318483643672548730289280330113038552) circle (0.05cm);
	\node at (0,-1.15) {\color{red}0};
	\node at (1,-1.15) {\color{blue}$\mu$};
	\draw (.5,-1) -- (.5,-.9);
	\node at (.5,-1.15) {$z_{\mathrm T}$};
	
	\node at (2.75,-1.15) {$x$};
	
	\node at (1.8,.6) {\color{red}$\sum\limits_{i\leq K_n(x)}Q_i^n$};
	\node at (-1.65,.65) {\color{blue}$\sum\limits_{i\geq K_n(x)}P_i^n$};
	\node at (0,1.5) { (c) $\mu>0$};
	\end{tikzpicture}
	
	\caption{Schematic representation of cumulative distribution functions for $\bm P^n$ and $\bm Q^n$, and the positions of $z_{\mathrm{C}}$ and $z_{\mathrm{T}}$ in different regimes.}
	\label{fig:cdfs}
\end{figure}

\begin{proof}
	We will prove Eq.~\eqref{ineq:dominance} and explain how Eq.~\eqref{ineq:dominance2} can be proven in an analogous way. Consider the function $L(y)=\log \frac{y}{1-y}$, which is strictly increasing for $y\in(0,1)$. Next, define two sequences of functions
	\begin{subequations}
		\begin{align}
			f_n(x)&:=\frac{2V(\bm p)}{nt_n^2}L\left(\sum_{i\leq K_n(x)}P^{n}_i\right),\\
			g_n(x)&:=\frac{2V(\bm p)}{nt_n^2}L\left(\sum_{i\leq K_n(x)}Q^{n}_i\right).
		\end{align}
	\end{subequations}
	We can combine the direct and converse parts of \cref{lem:majorisation_tail}, i.e., Eqs.~\eqref{eq:majorisation_1}~and~\eqref{eq:majorisation_2}, to obtain the limits $f_n\to f$ and $g_n\to g$ as $n\to \infty$, where
	\begin{subequations}
		\begin{align}
			f(x)=\frac{(x-\mu)^3}{\abs{x-\mu}}\quad\text{and}\quad g(x)=\frac{\nu x^3}{\abs{x}}.
		\end{align}
	\end{subequations}
	We now find that
	\begin{align}
		f(x)>g(x)
		\quad\Longleftrightarrow\quad
		\begin{dcases}
		x>z_{\mathrm C}&\text{for }\nu<1,\\
		x<z_{\mathrm C}&\text{for }\nu>1.\\
		\end{dcases}		
	\end{align}
	Therefore, for any $x$ in the above regions we have that \mbox{$f_n(x)\ev>g_n(x)$}. As $L$ is strictly monotone, this in turn implies that Eq.~\eqref{ineq:dominance} holds eventually in the same region. 
	
	The above argument only ensures that $f_n\to f$ and \mbox{$g_n\to g$} \emph{point-wise}. This does not yet allow us to conclude that there exists an $N$ such that Eq.~\eqref{ineq:dominance} will hold \emph{for all} $x\in[a,b]$ and all $n\geq N$. In \cref{app:uniformity} we close this gap by proving that this convergence is compact.
		
	Finally, Eq.~\eqref{ineq:dominance2} can be proven in an analogous way by substituting: $g_n\to -g_n$ and $g\to -g$.
\end{proof}

\section{Proof of the main theorem}
\label{sec:proof}

We are now ready to prove Theorems~\ref{thm:majorisation_ent}, \ref{thm:majorisation_th} and \ref{thm:converse}. We will achieve this in a series of steps. First, we will prove the following result.
\begin{prop}[Thermodynamic direction, $\v{\gamma}=\v{\eta}$, TVD]
	\label{prop:thermo}
	For a TVD error level of 
	\begin{align}
		\epsilon^-_n=e^{-nt_n^2}\quad\text{or}\quad\epsilon^+_n=1-e^{-nt_n^2},
	\end{align}
	the approximation majorisation condition 
	\begin{align}
		\label{eq:succ_transformation}
		\bm P^n\succ_{\epsilon^\pm_n} \bm Q^{n}
	\end{align}
		with $\bm f=\bm \eta$ holds with an optimal interconversion rate of
	\begin{align}
		\label{eq:succ_rate}
		R_n^{\mathrm{th}}(\epsilon_n^\pm)\simeq \frac{D(\bm p\|\bm \eta)\pm\sqrt{2V(\bm p\|\bm \eta)}\abs{1\pm1/\sqrt{\nu^{\mathrm{th}}}}t_n}{D(\bm q\|\bm \eta)}.
	\end{align}
\end{prop}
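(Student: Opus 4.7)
The plan is to reduce $\bm P^n \succ_{\epsilon_n^\pm} \bm Q^n$ (in total variation) to an analytic condition on cumulative distributions and then evaluate it at the critical abscissa identified by Lemma~\ref{lem:dominance}. A standard Hardy--Littlewood--P\'olya--type construction of the majorised vector nearest to $\bm Q^n$ in TVD gives the equivalence
\begin{align*}
\bm P^n \succ_{\epsilon} \bm Q^n
\quad\Longleftrightarrow\quad
\sup_k \left[\sum_{i\le k}Q_i^n - \sum_{i\le k}P_i^n\right] \le \epsilon
\end{align*}
(all vectors in decreasing order), so for a candidate rate $r_n(\mu) = (D(\bm p||\bm\eta) + \mu t_n)/D(\bm q||\bm\eta)$ the problem reduces to computing the asymptotics of this supremum. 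Restricting $k$ to the one-parameter family $k = K_n(x)$ suffices to leading $o(t_n)$ order, since rounding to the nearest such $k$ perturbs each CDF by at most $o(\epsilon_n^\pm)$. With $\bm f = \bm\eta$ the embedding is trivial, $D(\bm p||\bm\eta) = H(\bm\eta)-H(\bm p)$ and $V(\bm p||\bm\eta) = V(\bm p)$, so the $\nu$ and $r_n(\mu)$ of Lemma~\ref{lem:PQ_tail} coincide with $\nu^{\mathrm{th}}$ and the rate appearing in Eq.~\eqref{eq:succ_rate}.

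For the direct regime $\epsilon_n^- = e^{-nt_n^2}$ I take $\mu < 0$. Lemma~\ref{lem:dominance} asserts that the CDF of $\bm P^n$ eventually dominates that of $\bm Q^n$ outside a vanishing neighbourhood of $z_{\mathrm C} = \mu/(1-\sqrt{\nu^{\mathrm{th}}})$, so the supremum is attained as $x \to z_{\mathrm C}$; Lemma~\ref{lem:PQ_tail} then yields the common tail magnitude
\begin{align*}
\sum_{i\le K_n(z_{\mathrm C})} Q_i^n \simeq \exp\!\left(-\frac{\nu^{\mathrm{th}} z_{\mathrm C}^2}{2V(\bm p||\bm\eta)}\, n t_n^2\right),
\end{align*}
and equating this to $e^{-nt_n^2}$ forces $|\mu| = \sqrt{2V(\bm p||\bm\eta)}\,|1 - 1/\sqrt{\nu^{\mathrm{th}}}|$, producing the ``$-$'' branch of Eq.~\eqref{eq:succ_rate}. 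For the converse regime $\epsilon_n^+ = 1 - e^{-nt_n^2}$ I take $\mu > 0$; now the CDFs no longer cross, but Lemma~\ref{lem:dominance} applied to the opposite-tail relation \eqref{ineq:dominance2} locates the maximum of $\sum_{i\le K_n(x)}(Q_i^n - P_i^n)$ near $z_{\mathrm T} = \mu/(1+\sqrt{\nu^{\mathrm{th}}})$, where the lower tail of $\bm P^n$ and the upper tail of $\bm Q^n$ are each of order $e^{-nt_n^2}$, so that the gap is $1 - 2e^{-nt_n^2} + o(e^{-nt_n^2})$; matching $\epsilon_n^+$ up to $o(nt_n^2)$ in the exponent yields $\mu = \sqrt{2V(\bm p||\bm\eta)}(1 + 1/\sqrt{\nu^{\mathrm{th}}})$ and the ``$+$'' branch follows.

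Achievability ($\mu \le \mu^\pm$) and optimality ($\mu \ge \mu^\pm$) of each branch are obtained by rerunning the same calculation with $\mu = \mu^\pm \pm o(1)$: a shift of $\mu$ by $o(1)$ multiplies the relevant tail by $e^{o(nt_n^2)}$, so the error is $\simeq \epsilon_n^\pm$ and the $o(t_n)$ tolerance in Eq.~\eqref{eq:succ_rate} is respected. The main technical nuisance I anticipate is that Lemma~\ref{lem:dominance} only guarantees dominance on \emph{bounded} $x$-intervals; one therefore needs an auxiliary uniform tail estimate showing that the contribution to the supremum from $|x|$ large is negligible against $\epsilon_n^\pm$, which follows from the Gaussian-type decay in Lemmas~\ref{lem:magnitude_tail}--\ref{lem:PQ_tail} but has to be written out. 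A secondary subtlety is the degenerate case $V(\bm p) = 0$, where $z_{\mathrm C}$ and $z_{\mathrm T}$ become singular and the analysis must instead use the divergent variant of Lemma~\ref{lem:majorisation_tail} noted in the remark following it.
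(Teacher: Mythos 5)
Your proposal takes a genuinely different route from the paper. The paper splits the argument into an achievability half (explicitly constructing a cut-and-pile perturbation $\bmtilde P^{(\mu)}$ of $\bm P^n$, then verifying $\bmtilde P^{(\mu)}\succ\bm Q^n$ case by case) and an optimality half (lower-bounding $\delta(\bmtilde P^n,\bm P^n)$ via coarse-graining monotonicity of TVD). You instead invoke a single exact variational identity,
\begin{equation*}
\bm P^n\succ_\epsilon\bm Q^n\quad\Longleftrightarrow\quad\max_k\Big[\textstyle\sum_{i\le k}Q^n_i-\sum_{i\le k}P^n_i\Big]\le\epsilon,
\end{equation*}
and then read off the moderate-deviation exponent of the right-hand side from \cref{lem:PQ_tail,lem:dominance}. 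Your exponent computations are correct: the cis-crossing $z_{\mathrm C}$ yields $\nu z_{\mathrm C}^2/2V(\bm p)$ for $\mu<0$, the trans-crossing $z_{\mathrm T}$ yields $\nu z_{\mathrm T}^2/2V(\bm p)$ for $\mu>0$, and equating with $nt_n^2$ reproduces Eq.~\eqref{eq:succ_rate}; the degenerate and uniformity issues you flag are real and are precisely what the paper handles with the slack parameter $\zeta$ and \cref{app:uniformity}.

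There are, however, two substantive points you should not leave implicit. First, the variational identity above is the load-bearing step of your whole reduction and you assert it as ``standard'' without proof. The $(\Rightarrow)$ direction follows by choosing, for each $k$, the index set of the $k$ largest entries of $\bm Q^n$ and applying data-processing of TVD; the $(\Leftarrow)$ direction needs an explicit construction of the optimal $\tilde{\bm Q}$ (a steering or cut-and-pile construction on $\bm Q^n$, keeping track of the ordering), and is essentially what the paper's achievability lemma does on the $\bm P^n$ side. If you rely on the identity you must either prove it or cite where it is established. Second, your sentence ``\cref{lem:dominance} asserts that the CDF of $\bm P^n$ eventually dominates that of $\bm Q^n$ outside a vanishing neighbourhood of $z_{\mathrm C}$'' misstates the lemma: dominance is one-sided, $\bm P^n$ dominates only for $x>z_{\mathrm C}$ when $\nu<1$ (resp.\ $x<z_{\mathrm C}$ when $\nu>1$), while $\bm Q^n$ dominates on the opposite side, and the gap $\mathcal{Q}(x)-\mathcal{P}(x)$ vanishes, rather than attains its maximum, \emph{at} $z_{\mathrm C}$. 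Your exponent is nonetheless correct because the supremum is approached from within the $\bm Q$-dominant side as $x\to z_{\mathrm C}$, but the argument needs the same $\zeta$-slack device the paper uses to turn that limiting statement into a rigorous eventual bound. With those two pieces written out, the route works and is arguably more symmetric than the paper's, at the cost of requiring the exact TVD characterisation up front.
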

The proof of Proposition~\ref{prop:thermo} will consist of two parts: first, in Sec.~\ref{subsec:proof_achieve}, we will show that the claimed rate is achievable for the given error; and then, in Sec.~\ref{subsec:proof_opt}, that it is also optimal. This way we will prove a special case of Theorem~\ref{thm:majorisation_th} for the case of infinite temperature (when $\v{\gamma}=\v{\eta}$) and error level measured only by TVD; and Theorem~\ref{thm:converse} for the thermodynamic direction and infinite temperature. 

The next step is to generalise Proposition~\ref{prop:thermo} to arbitrary finite temperatures (arbitrary thermal state $\v{\gamma}$). It is enough to note that the approximate interconversion condition for the thermodynamic direction, $\hat{\v{P}}^n\succ_\epsilon\hat{\v{Q}}^n$, is exactly captured by Eq.~\eqref{eq:succ_transformation} if one only replaces $\v{p}$ and $\v{q}$ with $\hat{\v{p}}$ and $\hat{\v{q}}$, respectively. Moreover, since the relative entropy and relative entropy variance are invariant under the embedding map~\cite{chubb2017beyond}, one can obtain the optimal rate by ``unembeddinig'' Eq.~\eqref{eq:succ_rate}, i.e., replacing $\hat{\v{p}}$, $\hat{\v{q}}$ and $\v{\eta}=\hat{\v{\gamma}}$ with $\v{p}$, $\v{q}$ and~$\v{\gamma}$ respectively. Thus, by proving Proposition~\ref{prop:thermo}, we in fact prove Theorem~\ref{thm:majorisation_th} for any temperature and the error level measured by TVD; and Theorem~\ref{thm:converse} for the thermodynamic direction with arbitrary temperature. 

Then, in Sec.~\ref{subsec:proof_reversal}, we will prove the following result

\begin{prop}[Entanglement direction, TVD]
	\label{prop:ent}
	For a TVD error level of 
	\begin{align}
		\epsilon^-_n=e^{-nt_n^2}\quad\text{or}\quad\epsilon^+_n=1-e^{-nt_n^2},
	\end{align}
	the approximation majorisation condition 
	\begin{align}
		\label{eq:prec_transformation}
		\bm P^n\prec_{\epsilon^\pm_n} \bm Q^{n}
	\end{align}
	with $\bm f=\bm s$ holds with an optimal interconversion rate of
	\begin{align}
		\label{eq:prec_rate}
		R_n^{\mathrm{ent}}(\epsilon_n^\pm)\simeq \frac{H(\bm p)\pm\sqrt{2V(\bm p)}\abs{1\pm 1/\sqrt{\nu^{\mathrm{ent}}}}t_n}{H(\bm q)}.
	\end{align}
\end{prop}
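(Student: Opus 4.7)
The strategy is to establish Proposition~\ref{prop:ent} by reversing the argument used for Proposition~\ref{prop:thermo}, exploiting the fact that the tail-bound toolkit of Section~\ref{sec:tail_bounds} treats the two choices of free state (thermal $\bm\eta$ and sharp $\bm s$) on an equal footing. By the pre/post-majorisation equivalence of Ref.~\cite{chubb2017beyond}, the entanglement condition $\bm P^n\prec_{\epsilon_n^\pm}\bm Q^n$ is equivalent to the existence of a $\tilde{\bm Q}^n$ within TVD $\epsilon_n^\pm$ of $\bm Q^n$ whose Lorenz curve dominates that of $\bm P^n$. This is the mirror image of the condition handled in Proposition~\ref{prop:thermo}, where instead it is a perturbed $\bm P^n$ whose Lorenz curve must dominate that of $\bm Q^n$.

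Substituting $\bm f=\bm s$ into Lemma~\ref{lem:PQ_tail} gives $r_n(\mu)=(H(\bm p)-\mu t_n)/H(\bm q)$ and $\nu=\nu^{\mathrm{ent}}$, matching exactly the quantities appearing in Eq.~\eqref{eq:prec_rate}, with a sign flip of $\mu$ relative to the thermodynamic case since $H(\bm s)=0$. With these objects in hand, I would rerun the achievability and converse arguments of Sections~\ref{subsec:proof_achieve}~and~\ref{subsec:proof_opt} with the controlling Lorenz inequality reversed. For achievability at error $\epsilon_n^-$, I would construct $\tilde{\bm Q}^n$ by perturbing $\bm Q^n$ so that $\sum_{i\leq K_n(x)}\tilde Q_i^n\geq\sum_{i\leq K_n(x)}P_i^n$ holds at every cutoff $x$ in the relevant range; Lemma~\ref{lem:dominance} guarantees that this is feasible within a TVD budget scaling as $e^{-nt_n^2}$ once $\mu$ has passed the critical value $\sqrt{2V(\bm p)}\,|1-1/\sqrt{\nu^{\mathrm{ent}}}|$. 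For the converse, Lemma~\ref{lem:dominance} symmetrically implies that any $\tilde{\bm Q}^n$ majorising $\bm P^n$ must differ from $\bm Q^n$ in TVD by more than $\epsilon_n^-$ once $\mu$ lies on the wrong side of this threshold. The $\epsilon_n^+$ regime, where the error approaches $1$, would be treated analogously in the trans configuration of Fig.~\ref{fig:cdfs}(c), governed by $z_{\mathrm T}$ rather than $z_{\mathrm C}$, producing the factor $|1+1/\sqrt{\nu^{\mathrm{ent}}}|$ together with the opposite-sign contribution in Eq.~\eqref{eq:prec_rate}.

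The main obstacle I foresee is bookkeeping around the differing tensor structures of the two cases: for $\bm f=\bm\eta$ the padding inflates the support uniformly, whereas for $\bm f=\bm s$ it merely appends zeros, which alters the fine shape of the sorted distributions and demands some care when writing down the perturbation $\tilde{\bm Q}^n$ explicitly. Since Lemmas~\ref{lem:PQ_tail}~and~\ref{lem:dominance} were formulated uniformly across both free states, however, I expect these modifications to be largely cosmetic, with the reversal of the dominance direction yielding precisely the rate in Eq.~\eqref{eq:prec_rate}.
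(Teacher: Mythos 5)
You correctly identify the crucial enabling fact — that Lemma~\ref{lem:PQ_tail} is agnostic between $\bm f = \bm\eta$ and $\bm f = \bm s$ — but your proposed execution diverges from the paper's. You suggest rerunning the cut-and-pile achievability and coarse-graining converse arguments of Sections~\ref{subsec:proof_achieve}--\ref{subsec:proof_opt} with the majorisation direction flipped, i.e.\ perturbing $\bm Q^n$ so that $\tilde{\bm Q}^n \succ \bm P^n$. The paper explicitly flags this direct rerun as possible but instead takes a shorter reduction: it reapplies the \emph{entire} statement of Proposition~\ref{prop:thermo} with $\bm f = \bm s$ to obtain the optimal rate for the \emph{smallest} $R_n$ satisfying $\bm p^{\otimes n}\otimes\bm s^{\otimes nR_n}\succ_{\epsilon_n^\pm}\bm q^{\otimes nR_n}\otimes\bm s^{\otimes n}$ (smallest rather than largest because $H(\bm s)=0$ reverses the sense of the constraint), then swaps $\bm p\leftrightarrow\bm q$ and uses the $\epsilon$-pre/post-majorisation equivalence to convert $\succ_{\epsilon}$ into the desired $\prec_{\epsilon}$, and finally inverts the rate via the substitutions $n\leftarrow nR_n$, $t_n\leftarrow t_n/\sqrt{R_n}$, $R_n\leftarrow 1/R_n$ to land on Eq.~\eqref{eq:prec_rate}. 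The paper's route buys you the result with no re-derivation of the three-case analysis or the dominance lemma. Your route is viable — the thresholds do reconcile if you push through — but your characterisation of the modifications as ``largely cosmetic'' understates the work: when you cut-and-pile $\bm Q^n$ instead of $\bm P^n$, the error exponent governing the cut changes from $(x-\mu)^2/2V(\bm p)$ to $\nu x^2/2V(\bm p)$, so the critical value of $\mu$ and the boundaries of the three cases cannot simply be carried over from the thermodynamic proof and must be re-derived, and you also invoke the pre/post-majorisation equivalence only to rephrase which side is perturbed rather than as the mechanism that flips the majorisation direction, which is the load-bearing use the paper makes of it.
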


To prove Proposition~\ref{prop:ent} we will leverage the proof of Proposition~\ref{prop:thermo}. More precisely, we will explain how to adapt that proof, so that the thermodynamic direction gets replaced by the entanglement direction. This way we will  prove Theorem~\ref{thm:majorisation_ent} for the error level measured by TVD; and Theorem~\ref{thm:converse} for the entanglement direction. 

The final missing piece is to show that Theorems~\ref{thm:majorisation_ent}~and~\ref{thm:majorisation_th} also hold for the error level measured by infidelity. We will prove this in Sec.~\ref{subsec:proof_fidelity}, again by explaining the necessary modifications of the reasoning that will result in replacing TVD with infidelity distance.

\subsection{Proof of \cref{prop:thermo} (Achieveability)}
\label{subsec:proof_achieve}

We will start by considering the \emph{achieveability} of \cref{prop:thermo}, i.e., a lower bound on the optimal conversion rate for the thermodynamic direction. For notational convenience, we will drop the superscripts on both $R_n^\mathrm{th}$ and $\nu^\mathrm{th}$, adopt the convention $D(\cdot):=D(\cdot\|\bm\eta)$, and note that $V(\bm \cdot \|\bm \eta) = V(\cdot)$. Specifically, we will prove the following:

\begin{lem}[\cref{prop:thermo}: Achieveability]
	\label{lem:achieve}
	For a TVD error level of
	\begin{align}
		\epsilon^-_n=e^{-nt_n^2}\quad\text{or}\quad\epsilon^+_n=1-e^{-nt_n^2},
	\end{align}
	the optimal rate is lower bounded,
	\begin{align}
		R_n(\epsilon_n^\pm)	\gtrsim\frac{D(\bm p)\pm\sqrt{2V(\bm p)}\abs{1\pm1/\sqrt{\nu}}t_n}{D(\bm q)}.
	\end{align}
\end{lem}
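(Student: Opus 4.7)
The plan is to reduce the approximate majorisation statement $\bm P^n \succ_{\epsilon^\pm_n} \bm Q^n$ to a uniform upper bound on a partial-sum deficit. Using the standard characterisation (established in Ref.~\cite{chubb2017beyond}) that $\bm a \succ_\epsilon \bm b$ in TVD is equivalent to $\sum_{i\leq k} b_i^\downarrow - \sum_{i\leq k} a_i^\downarrow \leq \epsilon$ for every $k$, and using the monotonicity of partial sums in $k$ to restrict attention to thresholds of the form $k = K_n(x)$, it suffices to show that, eventually in $n$,
\begin{align}
\Delta_n(x) := \sum_{i \leq K_n(x)} Q^n_i - \sum_{i \leq K_n(x)} P^n_i \leq \epsilon_n^\pm
\end{align}
uniformly in $x \in \mathbb R$. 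The rate is then fixed by choosing $\mu = \mu^*_\pm - \kappa$ for arbitrarily small $\kappa > 0$, where $\mu^*_- = -\sqrt{2V(\bm p)}|1-1/\sqrt\nu|$ and $\mu^*_+ = \sqrt{2V(\bm p)}(1+1/\sqrt\nu)$; sending $\kappa \to 0$ at the end absorbs the $o(t_n)$ slack required for the $\gtrsim$ conclusion.

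For the $\epsilon^-_n = e^{-nt_n^2}$ direction I would partition $\mathbb R$ into a bounded ``bulk'' interval and its two tails. On the far-left tail the trivial bound $\Delta_n(x) \leq \sum_{i \leq K_n(x)} Q^n_i$ combined with \cref{lem:PQ_tail} gives decay $\exp(-nt_n^2 (\nu x^2 /(2V(\bm p)) + o(1)))$, which beats $\epsilon^-_n$ once $|x| \geq \sqrt{2V(\bm p)/\nu}$ (plus a vanishing margin). Symmetrically, on the far-right tail, $\Delta_n(x) \leq \sum_{i > K_n(x)} P^n_i$ is below $\epsilon^-_n$ once $x \geq \mu + \sqrt{2V(\bm p)}$. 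In the bulk interval, \cref{lem:dominance} delivers $\Delta_n(x) < 0$. The key algebraic check---and the reason the formula takes its specific form---is that requiring the dominance region ($x > z_{\mathrm C}$ for $\nu<1$, or $x < z_{\mathrm C}$ for $\nu>1$, with $z_{\mathrm C} = \mu/(1-\sqrt\nu)$) to overlap the bulk interval reduces, in either case, to exactly $\mu < \mu^*_-$.

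For the $\epsilon^+_n = 1 - e^{-nt_n^2}$ direction, the required inequality $\Delta_n(x) \leq \epsilon^+_n$ rearranges to the lower bound $\sum_{i > K_n(x)}Q^n_i + \sum_{i \leq K_n(x)}P^n_i \geq e^{-nt_n^2}$, i.e.\ a constraint on a sum of ``trans'' tails. The critical abscissa is now $z_{\mathrm T} = \mu/(1+\sqrt\nu)$, at which the two tail exponents coincide; plugging $z_{\mathrm T}$ into \cref{lem:PQ_tail} and invoking the trans part of \cref{lem:dominance} reduces the admissibility criterion to $\mu < \mu^*_+$. The principal technical obstacle in both cases is promoting the pointwise-in-$x$ dominance of \cref{lem:dominance} to a uniform statement on the bulk interval---a single $N$ beyond which $\Delta_n(x) \leq \epsilon_n^\pm$ holds for \emph{every} $x$ simultaneously---which is exactly what the compact convergence of \cref{app:uniformity} delivers. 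Two degenerate cases also need separate inspection: the resonance $\nu = 1$ (where $\mu^*_- = 0$ and $z_{\mathrm C}$ is undefined, so the tail bounds alone close the argument), and $V(\bm p) = 0$ (covered by the remark following \cref{lem:majorisation_tail}).
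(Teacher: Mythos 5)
Your proposal is correct and, at bottom, follows the same route as the paper's: the prefix-deficit characterisation of $\bm P^n\succ_\epsilon\bm Q^n$ that you invoke (Lemma~2 of Ref.~\cite{chubb2017beyond}) is itself proved by the cut-and-pile construction, which the paper instead instantiates explicitly as $\bmtilde P^{(\mu)}$ with a cut at $K_n(z_{\mu,\nu}-\zeta)$, so the two arguments are the same cut-and-pile argument written once abstractly and once concretely, and the bulk/tail decomposition, the roles of $z_{\mathrm C}$ and $z_{\mathrm T}$, the use of \cref{lem:PQ_tail} on the tails and \cref{lem:dominance} in the bulk, the compact-convergence lemma for uniformity, and the optimisation yielding $\mu^*_\pm$ all coincide with the paper. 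The one small imprecision is in the $\epsilon_n^+$ branch: what you actually need near $z_{\mathrm T}$ is just \cref{lem:PQ_tail} evaluated at $x=z_{\mathrm T}$ together with the monotonicity in $x$ of the two competing tail sums $\sum_{i>K_n(x)}Q^n_i$ and $\sum_{i\leq K_n(x)}P^n_i$ (so that their maximum is minimised at $z_{\mathrm T}$); the trans inequality of \cref{lem:dominance} compares a different pair of tails and is not the tool that closes this case.
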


We will prove this lemma by constructing a family of distributions $\bmtilde P^{(\mu)}$, which eventually obey the required majorisation condition $\bmtilde P^{(\mu)}\succ \bm Q^n$. We will then show that by picking appropriate values of $\mu=\mu_+$ and $\mu=\mu_-$, one can obtain distributions $\bmtilde P^{(\mu_\pm)}$ such that \mbox{$\delta(\bm P^n,\bmtilde P^{(\mu_\pm)})\ev\leq \epsilon_n^\pm$}.

\subsubsection{Constructing the approximate distribution ${\mathbf {\tilde P}^{(\mu)}}$}

To prove achieveability, we will construct a family of distributions $\bmtilde P^{(\mu)}$ which, for any fixed $\mu$, eventually majorise $\bm Q^n$. As in \cref{lem:PQ_tail}, consider the rate
\begin{align}
	r_n(\mu)=\frac{D(\bm p)+\mu t_n}{D(\bm q)},
\end{align}
where $\mu\in \mathbb{R}$ is a parameter of our construction. We will construct $\bmtilde P^{(\mu)}$ by the \emph{cut-and-pile} method. Specifically we will consider starting with $\bm P^n$, removing mass from its tail, and adding it to the largest element. This construction allows us to construct a nearby state which is higher in the majorisation order. We start by defining the cutting point,
\begin{align}
	z_{\mu,\nu}
	:=\begin{dcases}
		2\mu-z_{\mathrm C} & :\mu<0,~~\nu <1,\\
		z_{\mathrm C} & :\mu<0,~~\nu >1,\\
		z_{\mathrm T} & :\mu >0.
	\end{dcases}
\end{align}
If we let $\zeta>0$ be a small slack parameter, then $\bmtilde P^{(\mu)}$ is defined as
\begin{align}
	\tilde P_i^{(\mu)}:= \begin{dcases}
		P^n_1+\sum_{i\geq K_n(y)} P_i^n & :i=1,\\
		P^n_i & :1<i<K_n(y),\\
		0 & :i\geq K_n(y),
	\end{dcases}
\end{align}
with $y=z_{\mu,\nu}-\zeta$.

\subsubsection{Showing majorisation ${\mathbf {\tilde P}}^{(\mu)}\succ {\mathbf{\tilde Q}}^n$}

Given the above construction, we now want to prove the majorisation condition $\bmtilde P^{(\mu)}\succ\bm Q^n$ eventually holds. The idea here is to leverage \cref{lem:dominance}, and show that a cut-and-pile construction with a cut at $K_n(x)$ for any $x<z_{\mu,\nu}$ will always eventually majorise $\bm Q^n$.

\begin{lem}
	\label{lem:achieve1}
	For any fixed $\mu$, $\bmtilde P^{(\mu)} \ev\succ \bm Q^n$.
\end{lem}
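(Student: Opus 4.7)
The plan is to verify the cumulative inequality $\sum_{i \leq k} \tilde P^{(\mu)}_i \geq \sum_{i \leq k} Q^n_i$ for every $k \geq 1$ and all sufficiently large $n$. The range $k \geq K_n(y) - 1$ is trivial, since $\bmtilde P^{(\mu)}$ is supported on $\{1, \ldots, K_n(y) - 1\}$ and normalised. In the remaining range $1 \leq k < K_n(y) - 1$, the cut-and-pile construction gives
\begin{equation*}
\sum_{i \leq k} \tilde P^{(\mu)}_i \;=\; \sum_{i \leq k} P^n_i \;+\; \sum_{j \geq K_n(y)} P^n_j,
\end{equation*}
so I must show that this dominates $\sum_{i \leq k} Q^n_i$ eventually, uniformly in $k$.

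The easy case is $\mu > 0$: here $y = z_{\mathrm T} - \zeta < z_{\mathrm T}$, so applying \cref{lem:dominance}~\eqref{ineq:dominance2} to a compact interval containing $y$ yields $\sum_{j \geq K_n(y)} P^n_j > \sum_{i \leq K_n(y)} Q^n_i$ eventually, and since $\sum_{i \leq k} Q^n_i \leq \sum_{i \leq K_n(y)} Q^n_i$ throughout the non-trivial range, the cut-mass term alone suffices for every $k$. For $\mu < 0$, neither piece of the decomposition alone suffices across the whole range of $k$, and I partition at a threshold $k = K_n(x^*)$. I pick $x^*$ on the dominance side of $z_{\mathrm C}$ (i.e.\ $x^* > z_{\mathrm C}$ for $\nu < 1$, $x^* < z_{\mathrm C}$ for $\nu > 1$), and far enough from $z_{\mathrm C}$ that the asymptotic exponent $\nu (x^*)^2/(2V(\bm p))$ controlling $\sum_{i \leq K_n(x^*)} Q^n_i$ strictly exceeds the exponent $(y - \mu)^2/(2V(\bm p))$ controlling the cut mass. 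For $k \in [K_n(x^*), K_n(y) - 1)$, \cref{lem:dominance}~\eqref{ineq:dominance} applied to $[x^*, y]$ then delivers $\sum_{i \leq k} P^n_i > \sum_{i \leq k} Q^n_i$, and the cut mass can be discarded; for $k < K_n(x^*)$, the cut-mass term alone dominates, by monotonicity of $\sum_{i \leq K_n(x)} Q^n_i$ in $x$ combined with the magnitude comparison supplied by \cref{lem:PQ_tail}.

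The uniformity in $k$ required for a single $N$ to serve all $k$ is supplied by the compact (locally uniform) convergence proved in \cref{app:uniformity}, together with the monotonicity just mentioned for the regime $k < K_n(x^*)$. The main obstacle is finding an admissible $x^*$ in the $\mu < 0$ analysis: it must simultaneously satisfy the strict inequality $x^* \gtrless z_{\mathrm C}$ required by \cref{lem:dominance} and the exponent-gap inequality $\nu (x^*)^2 > (y - \mu)^2$ needed for the cut-mass argument. The subcase $\mu < 0$, $\nu < 1$ is the subtlest, since $y = 2\mu - z_{\mathrm C} - \zeta$ is placed on the side of $\mu$ opposite from $z_{\mathrm C}$; here the identity $(y - \mu)^2 \to (z_{\mathrm C} - \mu)^2 = \nu z_{\mathrm C}^2$ as $\zeta \to 0$, coming from the reflection symmetry of the $P$-tail exponent $(x - \mu)^2$ about $\mu$, shows that a positive slack $\zeta$ opens up an admissible window $x^* \in (z_{\mathrm C}, z_{\mathrm C} + O(\zeta))$ in which both conditions hold, provided $\zeta$ is chosen small enough that this window still lies below $y$.
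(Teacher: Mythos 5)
Your decomposition of the index range into ``pile dominates'', ``$P^n$ dominates $Q^n$ via \cref{lem:dominance}'', and ``trivial by normalisation'', together with the case split on $(\mu,\nu)$, mirrors the paper's proof; the essential mechanism (a strict moderate-deviation exponent gap from \cref{lem:PQ_tail} plus the compactly uniform dominance of \cref{lem:dominance}) is the same. Where you differ is in how the pile-dominance threshold is identified: the paper exploits the reflection symmetry $x\mapsto 2\mu-x$ of the $P$-exponent (Case~1) and $x\mapsto -x$ of the $Q$-exponent (Case~2) to read off a canonical comparison point ($z_{\mathrm C}+\zeta$ and $-z_{\mathrm C}$ respectively) at which the cut-mass exponent coincides with the $P$-tail exponent, whereas you instead impose the explicit inequality $\nu (x^*)^2 > (y-\mu)^2$ and solve for an admissible $x^*$. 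These are equivalent, and your version is arguably more transparent about exactly why the slack $\zeta$ is needed. One small imprecision worth flagging: in the subcase $\mu<0$, $\nu>1$ you describe $x^*$ as chosen ``on the dominance side of $z_{\mathrm C}$ ... far enough from $z_{\mathrm C}$''. But the exponent formula $\nu(x^*)^2/(2V(\bm p))$ for $\sum_{i\leq K_n(x^*)}Q^n_i$ coming from \cref{lem:PQ_tail} is only valid for $x^*\leq 0$, while $z_{\mathrm C}>0$ in this subcase; so the condition you need is not ``$x^*$ somewhat below $z_{\mathrm C}$'' but $x^*\lesssim -z_{\mathrm C}$, i.e.\ on the \emph{opposite} side of the origin. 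Your exponent-gap inequality, once one remembers the domain restriction $x^*\leq 0$, does force exactly this (and is the reason the paper's Case~2 threshold is $-z_{\mathrm C}$), but your phrasing could mislead a reader into trying an $x^*\in(0,z_{\mathrm C})$ for which the argument fails because that $Q$-partial-sum is close to~1, not exponentially small.
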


\begin{proof}
	To prove majorisation we need to show that, eventually, the inequalities
	\begin{align}
		\sum_{i=1}^{k}\tilde P_i^{(\mu)} \geq \sum_{i=1}^k Q^n_i 
		\label{ineq:majorisation}
	\end{align}
	hold for all $k$. The `cut' of the cut-and-pile construction implies Eq.~\eqref{ineq:majorisation} for large $k$. Specifically, the restricted support of $\bmtilde P^{(\mu)}$,
	\begin{align}
		\sum_{i< K_n(z_{\mu,\nu}-\zeta)}\tilde P_i^{(\mu)}=1,
	\end{align}
	implies that Eq.~\eqref{ineq:majorisation} holds trivially for any \mbox{$k\geq K_n(z_{\mu,\nu}-\zeta)$}. 
	
	The idea now is to show that the `pile' similarly gives us majorisation for small $k$, and then to leverage \cref{lem:dominance} to argue that $\bm P^n$ already majorises $\bm Q^n$ for intermediate $k$. We will split this argument into the three cases given in the definition of $z_{\mu,\nu}$, and illustrated by panels (a)-(c) of Fig.~\ref{fig:cdfs}.
	
	{\textbf{Case 1: }{$\mu<0$, $\nu<1$.}}
	Noticing that $\bm P^n$ tail bounds in \cref{lem:PQ_tail} are symmetric under \mbox{$x\to 2\mu-x$}, we have that
	\begin{align}
		&\lim\limits_{n\to\infty}\frac{1}{nt_n^2}\ln \left[\sum_{i\geq K_n(z_{\mu,\nu}-\zeta)}P^n_i\right]\notag\\
		&\qquad\qquad=
		\lim\limits_{n\to\infty}\frac{1}{nt_n^2}\ln \left[\sum_{i\leq K_n(z_{\mathrm C}+\zeta)}P^n_i\right].
	\end{align}
	Applying \cref{lem:dominance}, we therefore have that
	\begin{align}
		\sum_{i\geq K_n(z_{\mu,\nu}-\zeta)}P^{n}_i
		\ev >
		\sum_{i\leq K_n(z_{\mathrm C}+\zeta)}Q^{n}_i.
	\end{align}
	Using this, for any $k\leq K_n(z_{\mathrm C}+\zeta)$ we can leverage the `pile' in the $\bmtilde P^{(\mu)}$ construction to yield
	\begin{align}
		\sum_{i\leq k}\tilde P^{(\mu)}_i\geq\tilde P^{(\mu)}_1>\!\!\!\!\!\!\!\!\! 
		\sum_{i\geq K_n(z_{\mu,\nu}-\zeta)}\!\!\!\!\!\!\!\!\!\! P_i^n~ 
		\ev>\!\!\!\!\! \sum_{i\leq K_n(z_{\mathrm C}+\zeta)}\!\!\!\!\!\!\!\!\!\! Q_i^n
		\geq  \sum_{i\leq k}Q_i^n,
	\end{align}
	which implies Eq.~\eqref{ineq:majorisation}. Applying \cref{lem:dominance}, we have that Eq.~\eqref{ineq:majorisation} must also eventually hold on the remaining intermediate indices \mbox{$k\in[K_n(z_{\mathrm C}+\zeta),K_n(z_{\mu,\nu}-\zeta)]$}.
	
	{\textbf{Case 2:} {$\mu<0$, $\nu>1$}}. Noticing that $\bm Q^n$ tail bounds in \cref{lem:PQ_tail} are symmetric under $x\to -x$, we have that
	\begin{align}
		&\lim\limits_{n\to\infty}\frac{1}{nt_n^2}\ln \left[\sum_{i\geq K_n(z_{\mu,\nu})}Q^n_i\right]\notag\\
		&\qquad\qquad=
		\lim\limits_{n\to\infty}\frac{1}{nt_n^2}\ln \left[\sum_{i\leq K_n(-z_{\mathrm C})}Q^n_i\right].
	\end{align}
	We now use \cref{lem:dominance} again, giving for any \mbox{$k\leq K_n(-z_{\mathrm C})$} that the `pile' of $\bmtilde P^{(\mu)}$ implies Eq.~\eqref{ineq:majorisation},
	\begin{align}
		\sum_{i\leq k}\tilde P^{(\mu)}_i 
		~>\!\!\!\! \sum_{i\geq K_n(z_{\mu,\nu}-\zeta)}\!\!\!\!\!\!\!\!\! P_i^n~ 
		\ev> \sum_{i\leq K_n(-z_{\mathrm C})}\!\!\!\!\!\!\!\!\! Q_i^n
		\geq  \sum_{i\leq k}Q_i^n.
	\end{align}
	A direct application of Eq.~\eqref{lem:dominance} implies \cref{ineq:majorisation} also holds on the remaining intermediate indices \mbox{$k\in[K_n(-z_{\mathrm C}),K_n(z_{\mu,\nu}-\zeta)]$}.
	
	\textbf{Case 3:} {$\mu>0$}. For $k\leq  K_n(z_{\mathrm T}-\zeta)$, we use \cref{lem:dominance} to eventually give
	\begin{align}
		\sum_{i\leq k}\tilde P^{(\mu)}_i 
		>\!\!\! \sum_{i\geq K_n(z_{\mathrm T}-\zeta)}\!\!\!\!\!\! P_i^n 
		\ev>\!\!\! \sum_{i\leq K_n(z_{\mathrm T}-\zeta)}\!\!\!\!\!\! Q_i^n
		\geq  \sum_{i\leq k}Q_i^n.
	\end{align}
	In this case there is no intermediate region, which implies Eq.~\eqref{ineq:majorisation} holds for all $k$.
\end{proof}

\subsubsection{Showing that $\mathbf{\tilde{P}}^{(\mu)}$ is close to $\mathbf{P}^n$}
\label{subsubsec:achieve_errors}

Now that we have shown that $\bmtilde P^{(\mu)}$ eventually majorises $\bm Q^n$, we want to ask how close $\bmtilde P^{(\mu)}$ is to $\bm P^n$ in total variation distance. The answer is provided by the following lemma.

\begin{lem}
	\label{lem:achieve2}
	For a fixed $\mu<0$,
	\begin{align}
		\delta(\bmtilde P^{(\mu)},\bm P^n) &\ev< \exp\left(-\frac{(z_{\mu,\nu}-\mu-2\zeta)^2}{2V(\v{p})}nt_n^2\right),
	\end{align}
	and similarly for $\mu>0$,
	\begin{align}
		\delta(\bmtilde P^{(\mu)},\bm P^n) &\ev< 1-\exp\left(-\frac{(z_{\mu,\nu}-\mu-2\zeta)^2}{2V(\v{p})}nt_n^2\right).
	\end{align}
\end{lem}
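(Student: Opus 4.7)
The plan is to evaluate $\delta(\bmtilde P^{(\mu)},\bm P^n)$ directly from the cut-and-pile construction, observe that it reduces to a tail sum on $\bm P^n$, and then read off the asymptotic exponent from \cref{lem:PQ_tail} in each of the cases $\mu<0$ and $\mu>0$, with a small $\zeta$-gap absorbed as slack. Concretely, $\bmtilde P^{(\mu)}$ agrees with $\bm P^n$ on $1<i<K_n(y)$ (with $y=z_{\mu,\nu}-\zeta$), differs by $+M_n:=\sum_{j\geq K_n(y)}P^n_j$ at $i=1$ (the ``pile''), and vanishes for $i\geq K_n(y)$ (the ``cut''). Summing absolute differences immediately gives
\begin{align}
\delta(\bmtilde P^{(\mu)},\bm P^n)=M_n,
\end{align}
so the whole proof reduces to an asymptotic estimate of this upper-tail sum.

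For $\mu<0$, one first checks that $z_{\mu,\nu}>\mu$ in both subcases: when $\nu<1$, the inequality $z_{\mathrm C}<\mu<0$ forces $z_{\mu,\nu}=2\mu-z_{\mathrm C}>\mu$, while when $\nu>1$, $z_{\mu,\nu}=z_{\mathrm C}>0>\mu$. Hence for $\zeta$ sufficiently small we have $y>\mu$, and the upper-tail half of \cref{lem:PQ_tail} for $\bm P^n$ yields
\begin{align}
\lim_{n\to\infty}\frac{1}{nt_n^2}\ln M_n=-\frac{(z_{\mu,\nu}-\mu-\zeta)^2}{2V(\v p)}.
\end{align}
Since $z_{\mu,\nu}-\mu>0$, we have $(z_{\mu,\nu}-\mu-\zeta)^2>(z_{\mu,\nu}-\mu-2\zeta)^2$, so the definition of the limit produces the eventual strict bound $M_n<\exp\bigl(-(z_{\mu,\nu}-\mu-2\zeta)^2 nt_n^2/(2V(\v p))\bigr)$ stated in the lemma.

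For $\mu>0$, now $z_{\mu,\nu}=z_{\mathrm T}=\mu/(1+\sqrt{\nu})\in(0,\mu)$, so $y<\mu$ and in fact $M_n\to 1$. The key pivot is instead to estimate the complement $1-M_n=\sum_{i<K_n(y)}P^n_i$, a lower tail, to which the lower-tail half of \cref{lem:PQ_tail} applies:
\begin{align}
\lim_{n\to\infty}\frac{1}{nt_n^2}\ln(1-M_n)=-\frac{(z_{\mu,\nu}-\mu-\zeta)^2}{2V(\v p)}.
\end{align}
Because $z_{\mu,\nu}-\mu<0$ this time, adding $\zeta$ makes the squared quantity \emph{larger}, so $(z_{\mu,\nu}-\mu-\zeta)^2<(z_{\mu,\nu}-\mu-2\zeta)^2$, and the limit definition gives the eventual lower bound $1-M_n>\exp\bigl(-(z_{\mu,\nu}-\mu-2\zeta)^2 nt_n^2/(2V(\v p))\bigr)$, which rearranges to the stated inequality $\delta<1-\exp(\cdots)$.

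The only real subtlety in the argument is this sign-flip of $z_{\mu,\nu}-\mu$, combined with the need to pass to $1-\delta$ once the TVD saturates at one: both the direction of the bound and the role of the $\zeta$-slack change between the two cases, yet the exponent stated in the lemma is formally the same in each. Everything else is bookkeeping — verifying the position of $y$ relative to $\mu$ in each subcase so as to invoke the correct half of \cref{lem:PQ_tail}, and converting the limit statement into an eventual strict inequality via a standard $\varepsilon$-argument.
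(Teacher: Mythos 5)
Your proof is correct and follows the same route as the paper: compute $\delta(\bmtilde P^{(\mu)},\bm P^n)$ exactly as the tail mass $M_n=\sum_{i\geq K_n(y)}P^n_i$ of the cut-and-pile construction, invoke \cref{lem:PQ_tail} on $M_n$ (for $\mu<0$) or on $1-M_n$ (for $\mu>0$), then convert the resulting limits into eventual strict inequalities by trading a $\zeta$ of slack against the $\epsilon$ in the definition of the limit. Your additional remarks on checking $z_{\mu,\nu}>\mu$ for $\mu<0$ (and $z_{\mu,\nu}<\mu$ for $\mu>0$), so that the correct half of \cref{lem:PQ_tail} applies, make explicit a point the paper leaves implicit, but the argument is otherwise identical.
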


\begin{proof}
	A convenient feature of the cut-and-pile construction is that the total variation distance to the original distribution takes a particularly simple form. Specifically,
	\begin{equation}
		\delta(\bmtilde P^{(\mu)},\bm P^n)=\sum_{i\geq K_n(z_{\mu,\nu}-\zeta)}\!\!\!\!\!\!\!\!\!\! P^{n}_i.
	\end{equation}
	We can now apply \cref{lem:PQ_tail} to give
	\begin{subequations}
		\begin{align}
			\lim\limits_{n\to\infty} \frac{1}{nt_n^2}\ln \left[\delta(\bmtilde P^{(\mu<0)},\bm P^n)\right]
			&= -\frac{(z_{\mu,\nu}-\mu-\zeta)^2}{2V(\v{p})},\\
			\lim\limits_{n\to\infty} \frac{1}{nt_n^2}\ln \left[1-\delta(\bmtilde P^{(\mu>0)},\bm P^n)\right]
			&= -\frac{(z_{\mu,\nu}-\mu-\zeta)^2}{2V(\v{p})}.
		\end{align}
	\end{subequations}
	This, in turn, implies the eventual inequalities
	\begin{subequations}
		\begin{align}
			\ln \left[\delta(\bmtilde P^{(\mu<0)},\bm P^n)\right]
			&\ev < -\frac{(z_{\mu,\nu}-\mu-2\zeta)^2}{2V(\v{p})}nt_n^2,\\
			\ln \left[1-\delta(\bmtilde P^{(\mu>0)},\bm P^n)\right]
			&\ev > -\frac{(z_{\mu,\nu}-\mu-2\zeta)^2}{2V(\v{p})}nt_n^2,
		\end{align}
	\end{subequations}
	which are equivalent to our desired bounds.
\end{proof}

\subsubsection{Proof of Achieveability}

We now put together the above to prove the achieveability of \cref{prop:thermo}.

\begin{proof}[Proof of \cref{lem:achieve}]
	Consider our construction of $\bmtilde P^{(\mu)}$ for a specific choice of $\mu$. Specifically, let 
	\begin{align}
		\mu_\pm=\abs{1\pm1/\sqrt{\nu}}\left(\pm \sqrt{2V(\bm p)}-2\zeta\right),
	\end{align}
	where $\mu_-$ will give the direct result (with vanishing error $\epsilon_n^-$), and $\mu_+$ will give the converse result (with error $\epsilon_n^+$ approaching 1).
	
	From \cref{lem:achieve1} we have that $\bmtilde P^{(\mu_\pm)}\ev{\succ} \bm Q^n$ as required. Substituting $\mu_\pm$ in \cref{lem:achieve2}, we have that the TVD error is bounded,
	\begin{align}
		\delta\left(\bmtilde P^{(\mu_\pm)},\bm P^n\right)\ev<\epsilon^\pm_n,
	\end{align}
	as required. The rate in these cases takes the form
	\begin{align}
		\!\!\! r_n(\mu_\pm)
		&=\frac{D(\bm p)+\mu_\pm t_n}{D(\bm q)}\notag\!\\
		\!\!\!&=\frac{D(\bm p)+\abs{1\pm1/\sqrt{\nu}}\left(\pm \sqrt{2V(\bm p)}-2\zeta\right) t_n}{D(\bm q)}.\!
	\end{align}
	As the above analysis holds for any $\zeta>0$, we can therefore conclude that 
	\begin{align}
		R_n(\epsilon^\pm_n)	\gtrsim\frac{D(\bm p)\pm\sqrt{2V(\bm p)}\abs{1\pm1/\sqrt{\nu}}t_n}{D(\bm q)},
	\end{align}
	as desired.
\end{proof}

\subsection{Proof of \cref{prop:thermo} (Optimality)}
\label{subsec:proof_opt}

We now move on to showing the \emph{optimality} of \cref{prop:thermo}, i.e.\ an upper bound on the optimal conversion rate for the thermodynamic direction. For convenience we will reuse the notation used in the proof of achieveability.

\begin{lem}[\cref{prop:thermo}: Optimality]
	\label{lem:optimal}
	For a TVD error level of
	\begin{align}
		\epsilon^-_n=e^{-nt_n^2}\quad\text{or}\quad\epsilon^+_n=1-e^{-nt_n^2},
	\end{align}
	the optimal rate is upper bounded,
	\begin{align}
		R_n(\epsilon^\pm_n)	\lesssim\frac{D(\bm p)\pm\sqrt{2V(\bm p)}\abs{1\pm 1/ \sqrt{\nu}}t_n}{D(\bm q)}.
	\end{align}
\end{lem}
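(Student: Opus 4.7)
The plan is to deduce an upper bound on $R_n$ from the two one-sided consequences that any admissible approximate transformation must satisfy. Suppose $\tilde{\bm P}^n$ obeys $\tilde{\bm P}^n\succ \bm Q^n$ and $\delta(\bm P^n,\tilde{\bm P}^n)\leq \epsilon_n^\pm$; after permuting $\tilde{\bm P}^n$ into decreasing order (which by the rearrangement inequality only lowers its vector TVD to $\bm P^n$), I evaluate the majorisation condition at the threshold $k=K_n(x)$ and combine it with the set-test form of the TVD bound on $\{i\leq K_n(x)\}$ and on $\{i>K_n(x)\}$. Eliminating $\tilde{\bm P}^n$ this way yields the two master inequalities
\begin{subequations}
\begin{align}
\sum_{i\leq K_n(x)} Q_i^n &\leq \sum_{i\leq K_n(x)} P_i^n + \epsilon_n^\pm,\\
\sum_{i> K_n(x)} P_i^n &\leq \sum_{i>K_n(x)} Q_i^n + \epsilon_n^\pm,
\end{align}
\end{subequations}
both valid for every $x\in\mathbb{R}$ and all sufficiently large $n$.

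For the direct case ($\epsilon_n^-=e^{-nt_n^2}$, $\mu<0$), I take logarithms, divide by $nt_n^2$, send $n\to\infty$, and invoke \cref{lem:PQ_tail}. When $\nu<1$, the first master inequality becomes $\nu x^2\geq \min\{(x-\mu)^2,\,2V(\bm p)\}$ for $x\leq \mu$; evaluating at $x=\mu-\sqrt{2V(\bm p)}$ --- the point at which the $\bm P^n$-tail exponent reaches the $\epsilon_n^-$ floor, and which coincides with $z_{\mathrm C}$ exactly at the critical rate --- forces $\mu\leq -\sqrt{2V(\bm p)}\,\abs{1-1/\sqrt{\nu}}=\mu_-$. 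When $\nu>1$, the symmetric argument uses the second master inequality, which becomes $(x-\mu)^2\geq \min\{\nu x^2,\,2V(\bm p)\}$ for $x\geq 0$, and evaluation at $x=\sqrt{2V(\bm p)/\nu}$ reproduces the same bound.

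The converse case ($\epsilon_n^+=1-e^{-nt_n^2}$, $\mu>0$) needs a different hook, because the master inequalities degenerate as $\epsilon_n^+\to 1$. Instead I will use the overlap identity $1-\delta(\bm P^n,\tilde{\bm P}^n)=\sum_i\min\{P_i^n,\tilde P_i^n\}\geq e^{-nt_n^2}$, split the sum at $K_n(x)$, and bound the lower half by $\min\leq P_i^n$ and the upper half by the majorisation tail $\sum_{i>K_n(x)}\tilde P_i^n\leq \sum_{i>K_n(x)}Q_i^n$ to obtain
\begin{equation*}
\sum_{i\leq K_n(x)} P_i^n + \sum_{i>K_n(x)} Q_i^n \geq e^{-nt_n^2}.
\end{equation*}
For $x\in(0,\mu)$, \cref{lem:PQ_tail} reduces this to $\min\{(x-\mu)^2,\,\nu x^2\}\leq 2V(\bm p)$; maximising the left side at the trans-crossing $x=z_{\mathrm T}=\mu/(1+\sqrt{\nu})$ yields $\mu\leq \sqrt{2V(\bm p)}\,\abs{1+1/\sqrt{\nu}}=\mu_+$.

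The main obstacle is that \cref{lem:PQ_tail} only supplies pointwise asymptotic statements about exponents, so these must be converted into eventual strict inequalities at each chosen $x$. I will handle this by inserting a small slack $\zeta>0$ around the critical $x$ (mirroring the role of $\zeta$ in the achievability construction), separating \cref{lem:PQ_tail} into its liminf/limsup pieces, extracting eventual inequalities up to an $O(\zeta)$ shift in $\mu$, and letting $\zeta\to 0$ at the end. Combining all sub-cases establishes \cref{lem:optimal} and, together with \cref{lem:achieve}, completes \cref{prop:thermo}; \cref{thm:majorisation_th} for TVD error and the thermodynamic half of \cref{thm:converse} then follow by the unembedding argument outlined at the start of this section.
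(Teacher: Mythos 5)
Your proof is correct and uses essentially the same approach as the paper: it reduces the claim, via $\bmtilde P^n\succ\bm Q^n$ and the TVD constraint, to a comparison of tail exponents supplied by \cref{lem:PQ_tail}, evaluated at a critically chosen threshold. Your master-inequality packaging, the overlap identity $1-\delta=\sum_i\min\{P^n_i,\tilde P^n_i\}$ used for the converse, and the alternative parameterisation of the evaluation points ($\mu\mp\sqrt{2V(\bm p)}$-type expressions rather than $z_{\mathrm C}\pm\zeta$ and $z_{\mathrm T}$) are minor reformulations of the paper's coarse-grained TVD bound and its cut points, which coincide at the critical rate.
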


We will prove the above by showing that, for any distribution $\bmtilde P^n$ obeying the majorisation condition $\bmtilde P^n\succ \bm Q^n$ with a rate $r_n(\mu)$, the TVD distance between $\bmtilde P^n$ and $\bm P^n$ is eventually lower bounded by $\epsilon_n^\pm$. To achieve this, we will use the fact that the total variation distance is monotonically decreasing under coarse-graining. Specifically, for any distributions $\bm a$ and $\bm b$, and index $k$, the triangle inequality gives
\begin{align}
	\delta(\bm a,\bm b)&\geq \abs{\sum_{i\leq k}a_i-\sum_{j\leq k}b_j},
\end{align}
Applying this to distributions $\bmtilde P^n$ and $\bm P^n$, and index \mbox{$k=K_n(x)$}, gives
\begin{align}
	\delta(\bmtilde P^n,\bm P^n)&\geq 
	\abs{\sum_{i\leq K_n(x)}\tilde P_i^n-\sum_{j\leq K_n(x)}P^n_j}.
\end{align}
The idea is that, with a careful choice of $x$, we will be able to use the majorisation $\bmtilde P^n\succ \bm Q^n$ to replace the summations over $\bmtilde P^n$ with those over $\bm Q^n$. This will then allow us to apply \cref{lem:PQ_tail} to arrive at our final bound on the error. 

We will first present this argument in detail for the case where $\mu<0$ and $\nu <1$, and then present the modifications required for the remaining cases of $\mu<0$ and $\nu>1$, and $\mu>0$.

\subsubsection{Case 1: $\mu<0$, $\nu <1$}

Here we will perform our coarse-grained binning at \mbox{$x=z_{\mathrm C}-\zeta$}. Recalling that $\bmtilde P^n\succ \bm Q^n$, we have that
\begin{align}
	\sum_{i\leq K_n(z_{\mathrm C}- \zeta)}\tilde P^n_i\geq \sum_{i\leq K_n(z_{\mathrm C}-\zeta)} Q^n_i.
\end{align}
Using the positivity of $\zeta$, \cref{lem:dominance} allows us to also conclude that
\begin{align}
	\sum_{i\leq K_n(z_{\mathrm C}-\zeta)}\tilde P^n_i\ev> \sum_{i\leq K_n(z_{\mathrm C}-\zeta)} P^n_i.
\end{align}

We now use the fact that $\abs{\alpha-\beta}$ is monotonically increasing in $\alpha$ for $\alpha\geq \beta$. Using this, we have an eventual lower bound on TVD purely in terms of the total states $\bm P^n$ and $\bm Q^n$,
\begin{align}
	\delta(\bmtilde P^n,\bm P^n)&\geq \abs{\sum_{i\leq K_n(z_{\mathrm C}-\zeta)}Q_i^n-\sum_{j\leq K_n(z_{\mathrm C}- \zeta)}P^n_j}.
\end{align}

Applying \cref{lem:PQ_tail}, we see that the tail sum of $\bm Q^n$ asymptotically dominates. Specifically, we have
\begin{align}
	\liminf\limits_{n\to \infty} \frac{1}{nt_n^2}\ln\left[\delta(\bmtilde P^n,\bm P^n)\right]&\geq-\frac{\nu (z_{\mathrm C}-\zeta)^2}{2V(\bm p)},
\end{align}
and therefore
\begin{align}
	\delta(\bmtilde P^n,\bm P^n)&\ev> \exp\left(-\frac{\nu(z_{\mathrm C}- 2\zeta)^2}{2V(\bm p)}nt_n^2\right).
\end{align}
We now choose
\begin{align}
	\mu=
	\abs{1-1/\sqrt \nu }\left( -\sqrt{2V(\bm p)}+2\zeta \sqrt{\nu}\right).
\end{align}
This gives us that \mbox{$\delta(\bmtilde P^n,\bm P^n)\ev>\epsilon_n^-$}, with a rate of
\begin{align}
	\!\! r_n(\mu)\!=\!\frac{D(\bm p)+\abs{1\!-\! 1/\sqrt \nu }\left( -\sqrt{2V(\bm p)}+2\zeta \sqrt{\nu}\right)t_n}{D(\bm q)}.\!
\end{align}
As this is true for all $\zeta>0$, we can conclude that
\begin{align}
	R_n(\epsilon^-_n)	\lesssim\frac{D(\bm p)-\sqrt{2V(\bm p)}\abs{1- 1/\sqrt{\nu}}t_n}{D(\bm q)}.
\end{align}

\subsubsection{Case 2: $\mu<0$, $\nu >1$}

Now we consider the case of $\mu<0$ and $\nu>1$. The proof here is similar, starting with a cut at $x=z_{\mathrm C}+\zeta$. Here the tail sum of $\bm P^n$ asymptomatically dominates, with \cref{lem:PQ_tail} giving
\begin{align}
	\delta(\bmtilde P^n,\bm P^n)&\ev> \exp\left(-\frac{(z_{\mathrm C}-\mu+ 2\zeta)^2nt_n^2}{2V(\bm p)}\right).
\end{align}
We now make the choice
\begin{align}
	\mu=\abs{1-1/\sqrt{\nu}} \left(-\sqrt{2V(\bm p)}+2\zeta\right),
\end{align}
which gives that \mbox{$\delta (\bmtilde P^n,\bm P^n)\ev >\epsilon_n^-$}, and thus
\begin{align}
	R_n(\epsilon_n^-)\lesssim\frac{D(\bm p)-\sqrt{2V(\bm p)}\abs{1- 1/\sqrt{\nu}}t_n}{D(\bm q)}.
\end{align}

\subsubsection{Case 3: $\mu>0$}

Lastly, we consider the case of $\mu > 0$. Here, we perform our cut at $x=z_{\mathrm T}$ and, instead of using an argument based on majorisation \emph{from above}, we use majorisation \emph{from below}. Specifically, we note that $\bmtilde P^n\succ \bm Q^n$ implies
\begin{align}
	\sum_{i>K_n(z_{\mathrm T})}\tilde P^n_i\leq\sum_{i>K_n(z_{\mathrm T})}Q^n_i.
\end{align}
We now use that $\abs{\alpha-\beta}$ is monotonically \emph{decreasing} in $\alpha$ for $\alpha\leq \beta$. This gives us the analogous upper bound
\begin{align}
	\delta(\bmtilde P^n,\bm P^n)&\geq \abs{\sum_{i\leq K_n(z_{\mathrm T})}Q_i^n-\sum_{j\leq K_n(z_{\mathrm T})}P^n_j}.
\end{align}
Applying \cref{lem:PQ_tail}, we find that the tail sums of $\bm P^n$ and $\bm Q^n$ both dominate\footnote{The lack of a slack parameter in the cut means that these two sums compete. In the direct case (vanishing error $\epsilon^-_n$) this leads to a catastrophic cancellation, but in the converse case, this causes no problem.}, yielding
\begin{align}
	\delta(\bmtilde P^n,\bm P^n)&\ev> 1-\exp\left(-\frac{(z_{\mathrm T}-\mu+ \zeta)^2nt_n^2}{2V(\bm p)}\right).
\end{align}
for sufficiently large $n$. Substituting
\begin{align}
	\mu=\abs{1+1/\sqrt{\nu}}\left(\sqrt{2V(\bm p)}+\zeta\right),
\end{align}
we eventually have that \mbox{$\delta(\bmtilde P^n,\bm P^n)>\epsilon_n^+$}, and thus
\begin{align}
	R_n(\epsilon_n^+)\lesssim\frac{D(\bm p)+\sqrt{2V(\bm p)}\abs{1+1/\sqrt{\nu}}t_n}{D(\bm q)}.
\end{align}

\subsection{Proof of \cref{prop:ent}}
\label{subsec:proof_reversal}

\cref{prop:thermo} states that the largest rate $R_n$ such that the majorisation condition
\begin{align}
	\bm p^{\otimes n}\otimes \bm \eta^{\otimes nR_n}\succ_{\epsilon_n^\pm}\bm q^{\otimes nR_n}\otimes \bm \eta^{\otimes n}
\end{align}
holds is of the form
\begin{align}
	\!\!\! R_n(\epsilon_n^\pm)\simeq \frac{H(\bm \eta)-H(\bm p)\pm \sqrt{2V(\bm p)}\abs{1\pm1/\sqrt{\nu}}t_n}{H(\bm \eta)-H(\bm q)},
\end{align}	
where
\begin{align}
\nu = \frac{V(\bm p)}{V(\bm q)}\cdot \frac{H(\bm \eta)-H(\bm q)}{H(\bm \eta)-H(\bm p)}.
\end{align}
The proof of the above relied entirely on \cref{lem:PQ_tail}, which holds for $\bm f=\bm \eta$ as well as $\bm f=\bm s$. Thus, in order to rigorously prove \cref{prop:ent}, one could perform steps analogous to the ones presented in Sec.~\ref{subsec:proof_achieve}-\ref{subsec:proof_opt}, with $\v{\eta}$ replaced by $\v{s}$. Instead, below we present a shorter proof that directly employs the result stated by \cref{prop:thermo}.

\begin{proof}[Proof of \cref{prop:ent}]
	Since \cref{lem:PQ_tail} holds both for $\bm f=\bm \eta$ and for $\bm f=\bm s$, the statement of \cref{prop:thermo} can be extended to show that the smallest $R_n$ such that the majorisation condition
	\begin{align}
		\bm p^{\otimes n}\otimes \bm s^{\otimes nR_n}
		\succ_{\epsilon_n^\pm}
		\bm q^{\otimes nR_n}\otimes \bm s^{\otimes n}
	\end{align}
	holds is of the form
	\begin{align}
		\!\!\! R_n(\epsilon_n^\pm)\simeq \frac{H(\bm s)-H(\bm p)\pm \sqrt{2V(\bm p)}\abs{1\pm1/\sqrt{\nu}}t_n}{H(\bm s)-H(\bm q)},
	\end{align}	
	where
	\begin{align}
		\nu = \frac{V(\bm p)}{V(\bm q)}\cdot \frac{H(\bm s)-H(\bm q)}{H(\bm s)-H(\bm p)}.
	\end{align}
	
	We now want to reverse the direction of majorisation. The first step is simply to swap $\bm p\leftrightarrow \bm q$, and use the fact $\epsilon$-post-majorisation is equivalent to $\epsilon$-pre-majorisation~\cite{chubb2017beyond}. This transforms the considered majorisation relation into the following form
	\begin{align}
		\bm p^{\otimes nR_n}\otimes \bm s^{\otimes n}
		\prec_{\epsilon_n^\pm}
		\bm q^{\otimes n}\otimes \bm s^{\otimes nR_n}.
	\end{align}
	We see that $R_n$ now forms the `inverse rate' of the transformation between states represented by $\bm p$ and the ones represented by $\bm q$. To find the true rate of this transformation, we make the following substitutions
	\begin{align}
		n\leftarrow nR_n,\quad t_n\leftarrow t_n/\sqrt{R_n},\quad R_n\leftarrow 1/R_n.
	\end{align}
	As a result, the desired majorisation condition,
	\begin{align}
		\bm p^{\otimes n}\otimes \bm s^{\otimes nR_n}
		\prec_{\epsilon_n^\pm}
		\bm q^{\otimes nR_n}\otimes \bm s^{\otimes n}
	\end{align}
	holds with the optimal rate of the form 
	\begin{align}
		R_n(\epsilon_n^\pm)
		&\simeq \frac{H(\bm p)}{H(\bm q)\mp \sqrt{2V(\bm q)\frac{H(\bm q)}{H(\bm p)}}\abs{1\pm\sqrt{\nu}}t_n}\notag\\		
		&\simeq \frac{H(\bm p)\pm \sqrt{2V(\bm p)}\abs{1\pm 1/\sqrt{\nu}}}{H(\bm q)}
		\,,
	\end{align}
	and
	\begin{align}
		\nu= \frac{V(\bm p)}{V(\bm q)}\cdot \frac{H(\bm q)}{H(\bm p)}.
	\end{align}	
\end{proof}

\subsection{Extension to infidelity}
\label{subsec:proof_fidelity}

We now want to argue that, in the direct regime (for vanishing error $\epsilon_n^-$), our results extend to the case where we consider error in terms of infidelity instead of TVD. To show that the achieveability argument, presented in Sec.~\ref{subsec:proof_achieve}, extends to infidelity, we leverage the Fuchs-van de Graaf inequality~\cite{fuchs1999cryptographic}, specifically
\begin{align}
	1-\sqrt{F(\bm P^n,\bmtilde P^{n})}\leq \delta(\bm P^n,\bmtilde P^n).
\end{align}
Using this, in the direct regime where $\delta(\bm P^n,\bmtilde P^n)\to 0$, we have that the corresponding moderate exponential of infidelity must be bounded by that of the TVD,
\begin{align}
	&\limsup\limits_{n\to\infty}\frac1{nt_n^2}\ln\left[1-F(\bm P^n,\bmtilde P^n)\right]\notag\\
	&\qquad\qquad\qquad\leq\lim\limits_{n\to\infty}\frac1{nt_n^2}\ln\left[\delta(\bm P^n,\bmtilde P^n)\right].
\end{align}
Therefore, the TVD upper bounds employed in the proof of achieveability naturally extend to the infidelity.

In the proof of optimality, presented in Sec.~\ref{subsec:proof_opt}, we used coarse-graining and a monotonicity argument to bound the TVD as
\begin{align}
	\delta(\bm P^n,\bmtilde P^n)&\geq \abs{\sum_{i\leq k}P^n_i-\sum_{j\leq k}Q^n_j}.
\end{align}
By using the data-processing inequality and an analogous monotonicity argument, we can similarly bound the fidelity as
\begin{align}
	\sqrt{F(\bm P^n,\bmtilde P^n)}&\leq \sqrt{\sum_{i\leq k}P^n_i\cdot\sum_{j\leq k}Q^n_j}\notag\\
	&\qquad+\sqrt{\sum_{i> k}P^n_i\cdot\sum_{j> k}Q^n_j}.
\end{align}
In the case where $\nu<1$, we chose $k$ such that
\begin{align}
	\sum_{i\leq k}P^n_i
	\ll
	\sum_{i\leq k}Q^n_i
	\ll 1,
\end{align}
and similarly for $\nu>1$
\begin{align}
	\sum_{i>k}Q^n_i
	\ll
	\sum_{i>k}P^n_i
	\ll 1.
\end{align}
In either case, a single tail sum dominates in the bound upon both fidelity and TVD, similarly allowing us to lower bound the moderate exponent of the infidelity,
\begin{align}
	&\liminf\limits_{n\to\infty}\frac1{nt_n^2}\ln\left[1-F(\bm P^n,\bmtilde P^n)\right]\notag\\
	&\qquad\qquad\qquad\geq\lim\limits_{n\to\infty}\frac1{nt_n^2}\ln\left[\delta(\bm P^n,\bmtilde P^n)\right].
\end{align}

\section{Outlook}
\label{sec:outlook}

We have performed the moderate deviation analysis of resource interconversion problems for which single-shot transformation rules are based on majorisation and thermo-majorisation. As a result, in the regime of asymptotically vanishing error, we have found unified expressions for second-order corrections to asymptotic conversion rates within resource theories of entanglement, coherence and thermodynamics. More precisely, we obtained a family of results that specifies the optimal trade-off between the speed at which the conversion rate approaches the asymptotic rate, and the speed at which the error vanishes, when the number of transformed states $n$ grows. Crucially, we have found that the correction term can vanish independently of $n$ when a certain resonance condition between the initial and final states is satisfied. This opens the path to transformation reversibility beyond the asymptotic limit, the phenomenon that we discuss in detail in the accompanying paper~\cite{korzekwa2018avoiding}. 

There are quite a few research directions that one may want to take in order to generalise and extend the results presented in this paper. Since the small deviation analysis of the majorisation-based resource interconversion has been performed in Refs.~\cite{kumagai2017second,chubb2017beyond}, and the moderate deviation analysis was the focus of the current work, the straightforward extension would be to investigate the interconversion problem in the large deviation regime. This may be of particular interest in the context of fluctuation-free work extraction, where one may want to sacrifice a constant fraction of possible work output for its quality~\cite{aberg2013truly}. On the other hand, one could also look for the exact expression for the third-order term of the asymptotic expansion of $R_n$, which one can conjecture to scale as $O(t_n^2+\log n)$~\cite{Chubb2017}. Another obvious generalisation is to go beyond the restrictions of pure states (for entanglement direction) and energy-incoherent states (for thermodynamic direction). This, however, is a much harder task, as the single-shot transformation rules are still not known for these unrestricted cases. Finally, since the second-order analysis of resource interconversion led us not only to quantitative results, but also to qualitatively new predictions concerning finite-size reversibility, a similar analysis for other resource theories is now very well justified.

\subsection*{Acknowledgements} 

CC and KK acknowledge support from the ARC via the Centre of Excellence in Engineered Quantum Systems (EQuS), project number CE110001013. CC also acknowledges support from the Australian Institute for Nanoscale Science and Technology Postgraduate Scholarship (John Makepeace Bennett Gift). MT is funded by an ARC Discovery Early Career Researcher Award, project number DE160100821.

\appendix

\section{Relation to small deviation bound}
\label{app:small}

In this section we explore the relationship between our moderate deviation results of Theorems \ref{thm:majorisation_ent}--\ref{thm:converse}, and the small deviation results of Refs.~\cite{kumagai2017second,chubb2017beyond}. In these papers, a second-order expansion of the rate is given for a \emph{constant} infidelity error, in terms of the Rayleigh-normal distribution. Below we will consider the expansions of the Rayleigh-normal cumulative distribution function around $-\infty$, and show consistency with Theorems~\ref{thm:majorisation_ent}~and~\ref{thm:majorisation_th}. Moreover, we will explain how the expansion around $+\infty$ leads to a conjecture analogous to \cref{thm:converse} with error level measured by infidelity.

The Rayleigh-normal distributions are a parameterised family of distributions, depending on a parameter \mbox{$\nu\geq 0$}. For a formal definition and properties, see Ref.~\cite{kumagai2017second}. We will denote the associated cumulative distribution functions by $Z_\nu(\mu)$, for $\nu\geq 0$ and $\mu\in\mathbb R$. We will also adopt the notation for Gaussian probability and cumulative distribution functions of
\begin{align}
	\phi(x)&=\frac{1}{\sqrt{2\pi}}e^{-x^2/2},\\
	\phi_{\mu,\nu}(x)&=\frac{1}{\sqrt{2\pi\nu}}e^{-(x-\mu)^2/2\nu},\\
	\Phi(x)&=\int_{-\infty}^{x}\phi(t)\,\mathrm dt,\\
	\Phi_{\mu,\nu}(x)&=\int_{-\infty}^{x}\phi_{\mu,\nu}(t)\,\mathrm dt.
\end{align}

\subsection{Expansion around $\mu=-\infty$}

The crossing point $\alpha_{\mu,\nu}$ is defined by the equation~\cite{kumagai2017second}
\begin{align}
	\frac{\phi(\alpha_{\mu,\nu})}{\phi_{\mu,\nu}(\alpha_{\mu,\nu})}
	=\frac{\Phi(\alpha_{\mu,\nu})}{\Phi_{\mu,\nu}(\alpha_{\mu,\nu})}.
\end{align}
As $\mu\to -\infty$, $\alpha_{\mu,\nu}\to+\infty$. We will use the $x\to\infty$ approximation $\Phi(x)\approx 1$, which leads to \mbox{$\phi(\alpha_{\mu,\nu})/\phi_{\mu,\nu}(\alpha_{\mu,\nu})\approx 1$}, resulting in
\begin{align}
	\alpha_{\mu,\nu}\approx \frac{\mu}{1-\sqrt \nu}.
\end{align}

We now look at the Rayleigh-normal distribution, which takes the form
\begin{align}
	\sqrt{1-Z_\nu(\mu)}
	&=
	\sqrt{\frac{2\sqrt \nu}{1+\nu}} e^{-\mu^2/4(1+\nu)}\Phi_{\frac{-\mu}{1+\nu},\frac{2\nu}{1+\nu}}(-\alpha_{\mu,\nu})\notag\\
	&~~~~+\sqrt{\Phi(\alpha_{\mu,\nu})\Phi_{\mu,\nu}(\alpha_{\mu,\nu})}
	.
\end{align}
We now note that, as $\alpha_{\mu,\nu}\to +\infty$, the first term on the RHS is exponentially vanishing, and second exponentially approaching 1. Specifically we have
\begin{subequations}
\begin{align}
	\!\!\!\ln\left[1-\sqrt{\Phi(\alpha_{\mu,\nu})\Phi_{\mu,\nu}(\alpha_{\mu,\nu})}\right] &\approx -\frac{1}{2}\frac{\mu^2}{(1-\sqrt \nu)^2},\!\\
	\!\!\!\ln\left[e^{-\mu^2/4(1+\nu)}\Phi_{\frac{-\mu}{1+\nu},\frac{2\nu}{1+\nu}}(\alpha_{\mu,\nu})\right]
	&\approx -\frac{1}{2}\frac{\mu^2}{(1-\sqrt \nu)^2},\!
\end{align}
\end{subequations}
and thus for $\mu\to -\infty$ we have
\begin{align}
	\ln\left[Z_{\nu}(\mu)\right]\approx -\frac{1}{2}\frac{\mu^2}{(1-\sqrt \nu)^2}.
\end{align}

\subsection{Consistency with Theorems \ref{thm:majorisation_ent} and \ref{thm:majorisation_th}}

The small deviation analyses of Refs.~\cite{kumagai2017second,chubb2017beyond} consider transformations with a fixed infidelity, bounded away from zero. Despite this, using the above expansion one can na\"ively substitute $1-F=\epsilon_n^-$ into the expressions for the optimal rates obtained within the small deviation regime. Whilst this analysis is no longer rigorous, we will see that it gives a rate which is consistent with our rigorous results, Theorems~\ref{thm:majorisation_ent}~and~\ref{thm:majorisation_th}.

Inverting the expansion around $\mu=-\infty$, we have
\begin{align}
	Z^{-1}_\nu(\epsilon) \approx \abs{1-\sqrt{\nu}}\sqrt{-2\ln\epsilon},
\end{align}
for small positive $\epsilon$. In particular, for $\epsilon_n^{-}:=e^{-nt_n^2}$ we have
\begin{align}
	Z^{-1}_\nu(\epsilon_n^-) \simeq \abs{1-\sqrt{\nu}}\sqrt{2n}t_n.
\end{align}
Substituting the above into the results of Refs.~\cite{kumagai2017second,chubb2017beyond} yields the expressions for optimal rates from Theorems~\ref{thm:majorisation_ent}~and~\ref{thm:majorisation_th}.

\subsection{Expansion around $\mu= + \infty$. }

The crossing point $\alpha_{\mu,\nu}$ is defined by the equation~\cite{kumagai2017second}
\begin{align}
	\frac{\phi(\alpha_{\mu,\nu})}{\phi_{\mu,\nu}(\alpha_{\mu,\nu})}
	=\frac{\Phi(\alpha_{\mu,\nu})}{\Phi_{\mu,\nu}(\alpha_{\mu,\nu})}.
\end{align}
As $\mu\to +\infty$, $\alpha_{\mu,\nu}\to-\infty$. We can now use the approximation $\Phi(x)\approx\phi(x)/x$ for $x\to-\infty$. Applying this to the above, we have
\begin{align}
	\frac{\phi(\alpha_{\mu,\nu})}{\phi_{\mu,\nu}(\alpha_{\mu,\nu})}
	\approx\frac{\phi(\alpha_{\mu,\nu})}{\phi_{\mu,\nu}(\alpha_{\mu,\nu})}
	\cdot\frac{\alpha_{\mu,\nu}-\mu}{\nu\alpha_{\mu,\nu}},
\end{align}
which in turn implies that
\begin{align}
	\alpha_{\mu,\nu}\approx \frac{\mu}{1-\nu}.
\end{align}

Returning to the Rayleigh-normal distribution, we note that $\sqrt{1-Z_\nu(\mu)}$ exponentially vanishes as \mbox{$\alpha_{\mu,\nu}\to-\infty$}, specifically
\begin{subequations}
\begin{align}
	\!\!\!\!\!\ln\left[\sqrt{\Phi(\alpha_{\mu,\nu})\Phi_{\mu,\nu}(\alpha_{\mu,\nu})}\right] &\approx -\frac{1+\nu}{(1-\nu)^2}\frac{\mu^2}{4},\\
	\!\!\ln\left[e^{-\mu^2/4(1+\nu)}\Phi_{\frac{-\mu}{1+\nu},\frac{2\nu}{1+\nu}}(\alpha_{\mu,\nu})\right]&\approx -\frac{1}{1+\nu}\frac{\mu^2}{4}.\!
\end{align}
\end{subequations}
As such, we can see that the second expression dominates, leading us to conclude that, for $\mu\to+\infty$, the Rayleigh-normal distribution may be approximated as
\begin{align}
	\ln\left[1-Z_{\nu}(\mu)\right]\approx -\frac{\mu^2}{4(1+\nu)}.
\end{align}

\subsection{Conjectured infidelity analogue of \cref{thm:converse}}

Similar to the case of the expansion around $\mu=-\infty$, we can use the expansion around $\mu=+\infty$ to (non-rigorously) obtain a conjecture for the analogue of \cref{thm:converse} for error measured in infidelity. Inverting the expansion around $\mu=+\infty$, we find
\begin{align}
	Z_\nu^{-1}(\epsilon) \approx \sqrt{-4(1+\nu)\ln(1-\epsilon)},
\end{align}
for $\epsilon$ close to 1. Specifically, for $\epsilon_n^+=1-e^{-nt_n^2}$ we have
\begin{align}
	Z_\nu^{-1}(\epsilon_n^+) \approx \sqrt{4(1+\nu)n}t_n.
\end{align}
Substituting this approximation into the results of Refs.~\cite{kumagai2017second,chubb2017beyond}, we have that the optimal rates for an infidelity error level of $\epsilon_n^+$ are of the following form
\begin{subequations}		
	\begin{align}
	R_n^{\mathrm{ent}}(\epsilon_n)&\simeq R^{\mathrm{ent}}_\infty+\sqrt{\frac{2V(\v{p})}{H(\v{q})^2}}\sqrt{1+1/{\nu^{\mathrm{th}}}}\,t_n,
	\\
	R_n^{\mathrm{th}}(\epsilon_n)&\simeq R^{\mathrm{th}}_\infty+\sqrt{\frac{2V(\v{p}||\v{\gamma})}{D(\v{q}||\v{\gamma})^2}}\sqrt{1+1/{\nu^{\mathrm{th}}}}\,t_n.
	\end{align}	
\end{subequations}

\section{Compact convergence lemma}
\label{app:uniformity}

\begin{lem}
	\label{lem:eventually greater}
	Let $\lbrace f_n\rbrace_n$ and $\lbrace g_n\rbrace_n$ be sequences of non-decreasing real functions, both of which point-wise converge to continuous functions $f$ and $g$, respectively. If $f>g$, then $f_n|_X>g_n|_X$ eventually on all compact $X$.
\end{lem}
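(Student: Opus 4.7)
The plan is to reduce the claim to uniform convergence of the monotone sequences on compact sets, and then combine this with the fact that the continuous, strictly positive function $f-g$ attains a positive minimum on any compact set.

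First I would establish that $f_n\to f$ and $g_n\to g$ uniformly on any compact $X\subset\mathbb R$. Since both limit functions arise as pointwise limits of non-decreasing functions, they are themselves non-decreasing; combined with the hypothesis of continuity, this places us in the setting of the classical Pólya/second Dini theorem for monotone functions. Embedding $X$ in a compact interval $[a,b]$, the argument runs as follows: given $\varepsilon>0$, use uniform continuity of $f$ to pick a finite partition $a=x_0<x_1<\dots<x_k=b$ with $f(x_i)-f(x_{i-1})<\varepsilon/2$; choose $N$ so that $|f_n(x_i)-f(x_i)|<\varepsilon/2$ for all $i$ and $n\geq N$; then for any $x\in[x_{i-1},x_i]$ monotonicity of $f_n$ sandwiches $f_n(x)$ between $f_n(x_{i-1})$ and $f_n(x_i)$, while monotonicity of $f$ sandwiches $f(x)$ between $f(x_{i-1})$ and $f(x_i)$, and the triangle inequality gives $|f_n(x)-f(x)|<\varepsilon$. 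Running the same argument for $g_n$ and $g$ yields uniform convergence on $X$ for both sequences.

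Second, I would close the argument by a standard $\delta/3$ split. Since $f-g$ is continuous and strictly positive on the compact set $X$, it attains a positive minimum $\delta:=\min_{x\in X}(f(x)-g(x))>0$. By the uniform convergence established above, choose $N$ large enough that $\sup_{x\in X}|f_n(x)-f(x)|<\delta/3$ and $\sup_{x\in X}|g_n(x)-g(x)|<\delta/3$ for all $n\geq N$. Then for every $x\in X$ and every such $n$,
\begin{equation*}
f_n(x)-g_n(x) > (f(x)-g(x)) - \tfrac{2\delta}{3} \geq \delta - \tfrac{2\delta}{3} = \tfrac{\delta}{3} > 0,
\end{equation*}
which is exactly the eventual inequality $f_n|_X>g_n|_X$.

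The only non-routine step is the first one: pointwise convergence alone does not in general upgrade to uniform convergence, and the ordinary Dini theorem requires the \emph{sequence} to be monotone, not the individual functions. The key obstacle is therefore recognising that monotonicity of each $f_n$, together with continuity of the limit, is already enough — exploited by controlling $f_n$ between its values at finitely many partition points where pointwise convergence can be applied uniformly. Once this is in hand, the remainder of the proof is elementary.
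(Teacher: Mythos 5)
Your proof is correct and takes essentially the same route as the paper: upgrade pointwise convergence of the monotone sequences to uniform convergence on compacts (the P\'olya/second Dini theorem, which the paper simply cites from Resnick rather than proving), then combine the Extreme Value Theorem applied to $f-g$ with uniform convergence to get the eventual strict inequality. The only difference is that you spell out the partition argument for the uniform-convergence step rather than invoking it as a known fact.
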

\begin{proof}
	First we note (see, e.g., page 1 of Ref.~\cite{resnick2013extreme})
	that sequences of non-decreasing real functions which point-wise converge to continuous functions do so compactly. Let $\Delta_n:X\to\mathbb{R}$ be defined by $\Delta_n(x)=f_n(x)-g_n(x)$ for all $x\in X$. As $\Delta=f-g$, we know that $\Delta>0$. Indeed, because $\Delta$ is a continuous function on a compact domain, the Extreme Value Theorem tells us that $\Delta$ is bounded away from zero, i.e., there exists an $\epsilon>0$ such that $\Delta\geq \epsilon$. As $\Delta_n\to \Delta$ uniformly, we must eventually have that $\Delta_n\geq \Delta-\epsilon/2\geq \epsilon/2$, and so $f_n|_X> g_n|_X$.
\end{proof}

\end{document}